\newtheorem{proposition}{Proposition}
\newtheorem{lemma}{Lemma}
\newtheorem{theorem}{Theorem}
\newtheorem{fact}{Fact}
\theoremstyle{remark}
\newtheorem{remark}{Remark}
\theoremstyle{definition}
\newtheorem{definition}{Definition}
\newcommand{\homp}[1]{\mathsf{Hom}_{#1}}
\newcommand{\colisop}[1]{\mathsf{ColIso}_{#1}}
\newcommand{\IMM}{\mathsf{IMM}}
\newcommand{\naut}[1]{|\mathit{aut}(#1)|}
\newcommand{\tw}{\mathit{tw}}
\newcommand{\pw}{\mathit{pw}}
\newcommand{\td}{\mathit{td}}
\newcommand{\rat}{\mathbb{Q}}
\newcommand{\classP}{\mathsf{P}}
\newcommand{\NP}{\mathsf{NP}}
\newcommand{\VNP}{\mathsf{VNP}}
\newcommand{\VBP}{\mathsf{VBP}}
\newcommand{\mVF}{\mathsf{mVF}}
\newcommand{\mVP}{\mathsf{mVP}}
\newcommand{\mVNP}{\mathsf{mVNP}}
\newcommand{\mVBP}{\mathsf{mVBP}}
\newcommand{\VP}{\mathsf{VP}}
\newcommand{\SharpP}{\mathsf{\#P}}
\newcommand{\CLIQUE}{\mathsf{CLIQUE}}
\title{Graph Homomorphism Polynomials: Algorithms and Complexity}
\author{
  Balagopal Komarath\thanks{\texttt{bkomarath@rbgo.in}, Saarland University, Saarland Informatics Campus, Germany}
  \and
  Anurag Pandey\thanks{\texttt{apandey@mpi-inf.mpg.de}, Max-Planck Institut f\"ur Informatik, Saarland Informatics Campus, Germany}
  \and
  C. S. Rahul\thanks{\texttt{rahulcs@dubai.bits-pilani.ac.in}, BITS Pilani -- Dubai Campus, UAE}
}
\begin{document}

\maketitle
\thispagestyle{empty}

\begin{abstract}
  We study homomorphism polynomials, which are polynomials that enumerate all homomorphisms from a pattern graph $H$ to $n$-vertex graphs. These polynomials have received a lot of attention recently for their crucial role in several new algorithms for counting and detecting graph patterns, and also for obtaining natural polynomial families which are complete for algebraic complexity classes $\VBP$, $\VP$, and $\VNP$.  We discover that, in the monotone setting, the formula complexity, the ABP complexity, and the circuit complexity of such polynomial families are exactly characterized by the treedepth, the pathwidth, and the treewidth of the pattern graph respectively. 

Furthermore, we establish a single, unified framework, using our characterization, to collect several known results that were obtained independently via different methods. For instance, we attain superpolynomial separations between circuits, ABPs, and formulas in the monotone setting, where  the polynomial families separating the classes all correspond to well-studied combinatorial problems. Moreover, our proofs  rediscover fine-grained separations between these models for constant-degree polynomials. The characterization additionally yields new space-time efficient algorithms for several pattern detection and counting problems.
\end{abstract}

\newpage

\setcounter{page}{1}
\section{Introduction}

This work is a culmination of exploration of four themes in combinatorics, algorithm design, and algebraic complexity -- graph algorithms, homomorphism polynomials, graph parameters, and monotone computations. While each of these themes are of independent interest, a strong interplay among them has become quite apparent in the recent years, and has lead to several new advancements in algorithms and complexity.

The first theme is of \textit{graph algorithms}, where the algorithms that are relevant to this work are those corresponding to pattern detection and counting. Loosely speaking, in such problems, we look for the ``presence" of a graph $H$, called the \textit{pattern graph} in another graph $G$, called the \textit{host graph}\footnote{In this paper, the pattern graph and the host graph are always simple, undirected graphs.}. The notions of presence of one graph in another graph that have been the most prevalent are subgraph isomorphism, induced subraph isomorphism, and homomorphism. In detection and counting algorithms for subgraph isomorphism (resp. induced subgraph isomorphism), we want to detect and count subgraphs (resp. induced subraphs) of an $n$-vertex host graph which is \textit{isomorphic} to the pattern graph. Whereas, while looking for the occurrence of the pattern graph in the host graph, if we relax the mapping to allow multiple vertices in the pattern graph to be mapped to a single vertex in the host graph, while preserving the edge relations, we get a \textit{homomorphism} of the pattern graph in the host graph. When the notion is homomorphism, then we are interested in detecting and counting homomorphisms from the pattern graph to the host graph.

All the above mentioned problems have found many applications, both in theory and practice. For instance, detecting and counting (induced) subgraph isomorphisms are used in extremal graph theory in the study of dense graphs and quasirandom graphs \cite{Chung89, Lovasz12, Lovasz08}, and in many applications which boil down to analyzing real-world graphs. This includes finding protein interactions in biomolecular networks      \cite{AlonDHHS08, Milo824}, finding active chemicals of interest in the research for drug synthesis \cite{Borgelt02}, for advertisement targeting by finding and counting certain social structure patterns in the analysis of social networks \cite{Kong13, zhang}, and also finding user patterns for recommender services on platforms like Amazon and Yelp \cite{ZhaoYLSL17}. The homomorphism counting problem appears for instance, in statistical physics via partition functions, in graph property testing, and extremal graph theory (see \cite{borgs2006counting} for a survey). When both the host graph and the pattern graph are parts of the input, then it can be shown that all these pattern detection problems are $\NP$-complete, since they generalize the $\CLIQUE$ problem, whereas the corresponding counting problems are all known to be $\SharpP$-complete. On the other hand, in almost all the real-world applications pointed above, we have a fixed pattern graph which we are trying to detect or count in a given host graph. Thus, in this work, we focus on the setting when the pattern graph is a fixed graph, and the host graph is a part of the input, Here, since the pattern graph is of a fixed size, say its number of vertices is $k$, all the above problems can be solved using a trivial algorithm, based on exhaustive search, that runs in time $O(n^k)$ and space $O(\log(n))$ where $n$ is the number of vertices in the host graph. However, in almost all the real-world applications pointed above, the host graph is massive, and hence one desires faster algorithms, preferably linear or even sub-linear algorithms. Hence, there is a lot of interest and advances in improving upon this trivial algorithm using ideas from combinatorics, algebraic circuits, and machine learning (see \cite{MarxP14} for a survey, also see \cite{BKS18, Xin2020} and references therein). In fact, these problems are also very interesting from the perspective of complexity theory. For instance, it is conjectured that the best known algorithms for variants of these pattern detection and counting problems can not be further improved. If true, this would imply $\classP \neq \NP$ in a rather strong way (see \cite{LiRR17, Marx10, Rossman18, KR20}). Recently, it was discovered by Curticapean, Dell and Marx \cite{CurticapeanDM17} (also see \cite{DiazST02}) that counting subgraph isomorphisms corresponds to counting linear combinations of homomorphisms, and hence establishing that it is sufficient to just consider the homomorphism counting problem. Several advances have been attained by considering homomorphism problems \cite{BKS18,KloksKM00,KowalukLL13,FloderusKLL15b,WilliamsWWY15}.

Our second theme -- \textit{homomorphism polynomials}, corresponds to one of the most successful ways by which progress has been made towards homomorphism problems, and hence towards (induced) subgraph isomorphism problems too. For a pattern graph $H$ and a host graph $G$, the \textit{homomorphism polynomial} is the polynomial whose monomials enumerate homomorphisms from $H$ to $G$. For using homomorphism polynomials for obtaining new algorithms for the above graph problems for a pattern graph $H$, it suffices to consider the homomorphism polynomial from $H$ to $K_n$, the clique on $n$ vertices. Thus, we call the homomorphism polynomial from $H$ to $K_n$ as the \textit{homomorphism polynomial of $H$} (see Definition \ref{def: hompoly} for a formal definition). It turns out that efficient \textit{arithmetic circuit} (see Section \ref{sec:prelim} for definition) constructions of homomorphism polynomials can be used to obtain almost all known better algorithms for detecting induced subgraph isomorphisms as well (See \cite{BKS18}, also \cite{KloksKM00,KowalukLL13,FloderusKLL15b,WilliamsWWY15}). Therefore, the study of homomorphism polynomials is a crucial area of study within graph pattern detection and counting.  Homomorphism polynomials also turned out to be extremely useful in algebraic complexity theory in the quest for finding natural complete polynomial families for the complexity class $\VP$ (the algebraic complexity analog of $\classP$. See \cite{saptharishi2015survey}). Homomorphism polynomials yield polynomial families that are complete for important algebraic complexity classes such as $\VBP$, $\VP$, and $\VNP$ through a single framework, simply by considering different pattern graphs \cite{DurandMMRS16,MahajanS18, ChauguleLV19}, making it important not only  from the perspective of algorithm design, but also from the perspective of complexity theory. Homomorphism polynomials are thus the central focus of this work too. 

In our third theme, that is of \textit{graph parameters}, the parameters relevant to this work are treewidth, pathwidth, and treedepth of graphs. Treewidth, loosely speaking, is a measure of how far a graph is from being a tree; similarly the pathwidth measures how far a graph is from being a path; and treedepth measures how far a graph is from being a star (see Section \ref{sec:prelim} for definitions). The connection between the parameters treewidth and pathwidth and counting homomorphisms was first explored in \cite{DiazST02} (also see \cite{BKS18,MahajanS18}), where it is shown that when $H$ has bounded treewidth, then there are small-sized arithmetic circuits for homomorphism polynomials, whereas when $H$ has small pathwidth, then the corresponding homomorphism polynomials have small-sized \textit{algebraic branching programs} (ABPs, defined in Section \ref{sec:prelim}). These improved constructions for circuits and ABPs are one of the main source of advancement in finding improved algorithms for these pattern detection and counting problems \cite{BKS18,KloksKM00,KowalukLL13,FloderusKLL15b,WilliamsWWY15}. The connection between the subgraph isomorphism problem and treedepth has also been explored in the context of Boolean computational models in \cite{LiRR17} and \cite{KR20}, where they discuss treewidth and treedepth based upper bounds and lower bounds for counting subgraph isomorphisms for Boolean circuits and formulas. In the context of parameterized counting complexity of these problems, a stronger connection between treewidth and the complexity of counting homomorphisms was established by Dalmau and Jonsson \cite{dalmau04} who showed a dichotomy theorem stating that when the pattern graph has bounded treewidth, then we can count homomorphisms in polynomial time, otherwise, we can show \#W[1]-hardness. Now we turn our attention to the other major application of homomorphism polynomials. It turns out that the the parameters treewidth and pathwidth played a crucial role in the construction of natural complete polynomials for $\VP, \VBP,$ and $\VNP$ too. More specifically, the complete polynomials for $\VP$ and $\VBP$ were homomorphism polynomials corresponding to specific pattern graphs of bounded treewidth and bounded pathwidth respectively. 
Thus, one sees that in the context of these pattern detection and counting problems, and in the homomorphism polynomials of interest, these graph parameters of the pattern graph ubiquitously pop up over and over again as the crucial complexity parameters. This made us wonder, to what extent do these parameters dictate the complexity of these problems and, in particular, the homomorphism polynomials.

Towards this, the starting point of our work is the observation that all the improved arithmetic circuit constructions of homomorphism polynomials based on treewidth and pathwidth, do not use any negative constants in the circuit. This brings us to our final theme, that is, of \textit{monotone computation}.

In the Boolean setting, monotone computations refer to computations that do not involve negation operation, the NOT gate. Similarly, in the algebraic setting for computing polynomials, monotone arithmetic circuits correspond to circuits that do not use negative constants or subtraction gate, that is, there is no cancellation involved in the circuit. Monotone computations are interesting for three main reasons. First reason is that monotone operations have favorable stability properties with respect to rounding errors \cite{Schnorr76}. Secondly, an improved monotone circuit, since it does not use any cancellations and hence algebraic identities, requires combinatorial insights, and often leads to interesting combinatorial algorithms for the problem at hand. Such constructions often reveal interesting combinatorial insights about the problem. For instance, Grenet's monotone algebraic branching program construction of the permanent polynomial \cite{Grenet12}. Finally, the weakness of monotone models resulting from the lack of cancellations, makes them significantly easier to understand. As a consequence, monotone computations are much better understood, both in the Boolean and the algebraic setting, and hence have been used as a starting point towards understanding the general model. Indeed, we know exponential lower bounds as well as separations between important complexity classes when we restrict ourselves to the monotone setting, in contrast to the embarrassingly poor understanding that we have in the general setting. In particular, Schnorr  \cite{Schnorr76} showed an exponential lower bound for the clique polynomial, which happens to be a special case of homomorphism polynomial, where the pattern graph is a clique. Moreover, we know superpolynomial separations between complexity classes\footnote{$\mVF, \mVBP$ and $\mVP$ refer to the class of polynomial families computable by poly-sized monotone arithmetic formulas, monotone algebraic branching programs, and monotone arithmetic circuits, respectively. $\mVNP$ refers to the monotone analog of the complexity class $\VNP$.} $\mVF$ and $\mVBP$ \cite{Snir80}, $\mVBP$ and $\mVP$ \cite{HrubesY16}, and finally, very recently, between $\mVP$ and $\mVNP$ \cite{Yehudayoff19}. None of these separations are settled in the non-monotone setting. Moreover, the best known circuit lower bound in the non-monotone setting is just $\Omega(n \log n)$ \cite{DBLP:journals/tcs/BaurS83}. Thus, apart from the special interests in algorithms based on improved monotone circuits, from the perspective of lower bounds too, it makes sense to first study monotone models, as a first step towards getting lower bounds for the general model. Many times, the best upper bounds also first come in the monotone setting. For instance, the best known upper bound for the permanent polynomial for ABPs is a monotone construction \cite{Grenet12}. In fact, for our central problem of interest, that is, computing homomorphism polynomial, in a lot of settings, the best known upper bounds are still those via monotone constructions\footnote{
We can get the $O(n^{\omega})$ upper bound corresponding to the pattern graphs with treewidth  $= 2$ \cite{CurticapeanDM17}, and $O(n^{k\omega/3})$ for all $k$-vertex graphs \cite{Poljak1985}, where $\omega$ is the exponent of matrix multiplication For large pattern graphs with treewidth $> 2$, the monotone constructions are the best known.}.   

This motivated us to pursue the concrete question of whether these treewidth and pathwidth based improved monotone arithmetic circuits for homomorphisms polynomials can further be improved, and to what extent would these graph parameters dictate it. The answer that we discovered turns out to be conceptually quite satisfying, and settles the problem completely. 

\subsection{Our contributions}
In this work, we fully solve the question of monotone complexity of homomorphism polynomials by showing that they are completely determined by the aforementioned graph parameters. 

For arithmetic circuits, we discover that the treewidth of the pattern graph exactly determines the monotone complexity of its homomorphism polynomial, by establishing that the treewidth based upper bound in \cite{DiazST02, MahajanS18, BKS18} is also a lower bound.
\begin{theorem} \label{thm:mainckt}
  The monotone arithmetic circuit complexity of homomorphism polynomial for a pattern graph $H$ is $\Theta(n^{\tw(H)+1})$, where $\tw(H)$ is the treewidth of $H$.
\end{theorem}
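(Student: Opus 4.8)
The upper bound $O(n^{\mathit{tw}(H)+1})$ is already known and I will only recall it: fix a width-$\mathit{tw}(H)$ tree decomposition of $H$, make it nice (its depth then depends only on $H$), and build a circuit that processes the decomposition tree bottom-up, computing for every node and every map $\psi$ from that node's bag into $[n]$ the (weighted) sum over partial homomorphisms of the subgraph seen so far that restrict to $\psi$ on the bag; there are $O(n^{\mathit{tw}(H)+1})$ such states, each obtained from a constant number of earlier states using only additions and multiplications, so the circuit is monotone of the claimed size (this is the construction of \cite{DiazST02,MahajanS18,BKS18}). The content of the theorem is that this is optimal for monotone circuits.

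For the lower bound I plan to show that any monotone circuit for $\mathsf{Hom}_H(K_n)$ of size $s$ induces a tree decomposition $\mathcal{D}$ of $H$ together with the inequality $s = \Omega(n^{w(\mathcal{D})+1})$, where $w(\mathcal{D})$ is its width; since $w(\mathcal{D}) \ge \mathit{tw}(H)$ for \emph{every} tree decomposition, the theorem follows. After homogenising, I will use that the circuit is monotone---so no monomial is ever cancelled---to argue that every multiplication gate $v = v_1 \times v_2$ reaching the output corresponds to a partition $E(H) = E_1(v) \sqcup E_2(v)$ of the pattern edges (discarding coincidences among homomorphisms by restricting attention to injective homomorphisms, i.e. to labelled copies of $H$ in $K_n$, and tracking the multiplicity factor $|\mathit{aut}(H)|$). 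The heart of the argument is a rigidity statement: since the monomials surviving through $v$ are all monomials of $\mathsf{Hom}_H(K_n)$, the sub-polynomial at $v_1$ cannot exhibit two different images of the interface between the two sides; consequently there is a bag $\beta_v$ of the decomposition one is building on which every surviving monomial of $v$ is pinned to a single map $\psi \colon \beta_v \to [n]$, forcing a distinct gate for each of the $n^{|\beta_v|}$ choices of $\psi$ (up to bounded repetition). Reading off the $\beta_v$ along the recursive product structure of the circuit assembles a genuine tree decomposition $\mathcal{D}$; its largest bag has size at least $\mathit{tw}(H)+1$, and the $\Omega(n^{\mathit{tw}(H)+1})$ gates needed to pin that bag give the bound. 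The DAG structure of the circuit is harmless here: gates pinned to different colourings of $\beta_v$ have disjoint monomial supports, so they cannot be shared, and the count of distinct gates stands.

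I expect the main obstacle to be pinning down the exponent \emph{exactly}: $\mathit{tw}(H)+1$, and not $\mathit{tw}(H)$ or a constant multiple. Treating the product structure of the circuit as a branch decomposition only controls the interface sizes $|V(E_1(v)) \cap V(E_2(v))|$, which measure branchwidth and lose a multiplicative constant against treewidth; one must show instead that the rigidity argument forces the circuit to pin a \emph{full bag} rather than merely a separator, that the family $\{\beta_v\}$ really does satisfy the tree-decomposition axioms, and that $\Omega(n^{|\beta_v|})$ distinct gates are genuinely required. Two further technical points are worth flagging. First, rigidity has to be invoked at the right degree level of the homogenised circuit: for small-treewidth patterns the top-level product gates are too coarse---for $H = K_3$, for instance, each top product gate already produces $\Theta(n)$ monomials and only yields an $\Omega(n^2)$ bound, whereas descending to the level at which gates become rigid single-bag computations recovers the correct $\Omega(n^3)$. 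Second, the correspondence ``surviving monomial $\leftrightarrow$ partial homomorphism'' must be arranged so that non-multilinear monomials and the $|\mathit{aut}(H)|$-fold repetitions do not corrupt the count.
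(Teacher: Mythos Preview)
Your plan is the same strategy the paper uses---extract a tree decomposition of $H$ from the circuit and argue that a largest bag forces $\Omega(n^{\tw(H)+1})$ distinct gates---but the paper's execution differs in ways that directly resolve the three obstacles you flag. First, rather than wrestling with injective homomorphisms and the $|\mathit{aut}(H)|$ multiplicity inside $\homp{H}$, the paper passes (via a partial-derivative lemma showing equal monotone complexity) to the \emph{colored isomorphism polynomial} $\colisop{H,n}$, whose monomials are in clean bijection with tuples $(u_1,\dotsc,u_k)\in[n]^k$; this disposes of your second flagged technicality outright. Second, instead of assigning a global edge partition $E_1(v)\sqcup E_2(v)$ to each multiplication gate $v$---which is ill-defined in a DAG, since the same gate can sit in parse trees realising different partitions---the paper fixes a \emph{single} monomial $m$ and a single parse tree $T$ for it, and builds the tree decomposition from $T$ alone: an input $x_{\{(i,u),(j,v)\}}$ gets bag $\{i,j\}$, a product gate's bag is the union of its children's root bags, and---crucially---a vertex $i$ is \emph{forgotten} (a new parent bag excluding $i$ is added) as soon as all edges of $H$ incident to $i$ have been multiplied in. This forgetting step is exactly what upgrades the construction from a branch decomposition to a genuine tree decomposition and yields the exponent $\tw(H)+1$ on the nose, answering your first obstacle. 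Third, your ``right degree level'' concern disappears: having built the decomposition from $T$, the paper simply picks \emph{any} gate $g$ in $T$ whose associated bag has size at least $\tw(H)+1$ (one must exist by definition of treewidth) and proves, by a parse-tree swapping argument, that every other monomial whose parse tree contains $g$ must agree with $m$ on all $\tw(H)+1$ bag colours---otherwise a swap produces a monomial with two vertices of the same colour, which $\colisop{H,n}$ forbids. Hence each gate serves at most $n^{k-\tw(H)-1}$ of the $n^k$ monomials, and the bound follows with no degree-level case analysis. In short, your rigidity intuition is exactly right, but anchoring it to one parse tree (rather than to the circuit globally) and switching to $\colisop{H}$ are what make the argument go through cleanly.
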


In the case of algebraic branching programs, we find that the pathwidth of the pattern graph is the parameter controlling the monotone complexity of its homomorphism polynomial, again by proving a lower bound that exactly matches the pathwidth based upper bound in \cite{DiazST02, MahajanS18, BKS18}.
\begin{theorem}\label{thm:mainabp}
  The monotone ABP complexity of homomorphism polynomial for a pattern graph $H$ is $\Theta(n^{\pw(H)+1})$, where $\pw(H)$ is the pathwidth of $H$.
\end{theorem}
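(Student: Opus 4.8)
\medskip
\noindent\textbf{Proof plan for Theorem~\ref{thm:mainabp}.}
The upper bound $O(n^{\pw(H)+1})$ is the path-decomposition construction of \cite{DiazST02,MahajanS18,BKS18}, so the new work is the matching lower bound $\Omega(n^{\pw(H)+1})$. The plan is to show that any monotone ABP for the homomorphism polynomial of $H$ can be ``read'' as a path decomposition of $H$ of width essentially $\log_n(\text{size})-1$, and then to invoke the definition of $\pw(H)$. After routine normalisation (homogenisation and layering, which changes the size by only a constant factor) we may assume the program is layered with node-layers $N_0=\{s\},N_1,\dots,N_d=\{t\}$, where $d$ is the degree, a constant depending only on $H$. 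The single feature of the monotone setting we exploit is the absence of cancellation: every $s$--$t$ path spells out an actual monomial of the homomorphism polynomial, hence a genuine homomorphism $H\to K_n$, and conversely every homomorphism is produced by at least one path.

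The heart of the argument is a rigidity statement. Fixing a layer $\ell$ and writing $f=\sum_{u\in N_\ell}g_uh_u$ with $g_u,h_u$ the polynomials computed before and after $u$, consider two homomorphisms $\phi,\psi$ whose paths both pass through $u$, and let $S_\phi\subseteq V(H)$ be the set of vertices whose variable is read before $u$ on $\phi$'s path. Multiplying the $g_u$-monomial of $\phi$ by the $h_u$-monomial of $\psi$ produces a monomial of $f$, hence (no cancellation) again a homomorphism monomial; since such a monomial fixes the colour of every vertex exactly once, this forces $S_\phi=S_\psi$. Therefore each node $u$ carries a well-defined ``processed set'' $S_u$, the homomorphisms through $u$ form a combinatorial rectangle over the bipartition $(S_u,V(H)\setminus S_u)$, and every edge of $H$ crossing this cut is ``frozen'': the colour sets used at its two endpoints over the homomorphisms through $u$ are disjoint. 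Along any one $s$--$t$ path the sets $S_u$ grow by a single vertex per layer and so induce a linear order on $V(H)$; the active boundary at each layer is exactly a vertex separator of that order, and since $\pw(H)$ equals the vertex separation number of $H$, every such order has a layer with an active boundary of size at least $\pw(H)$.

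It remains to convert a large active boundary into many edges. The aim is to single out one layer transition $N_\ell\to N_{\ell+1}$, introducing a vertex $v$, at which (i) the nodes of $N_\ell$ are forced to realise essentially all $\Theta(n^{\pw(H)})$ colourings of a separator of size $\pw(H)$, and (ii) each such node must be extended by an essentially unconstrained colour of $v$ along its own edge, because the exact, cancellation-free representation of the polynomial prevents two continuations from sharing an edge without creating a non-homomorphism monomial; together these give $\Omega(n^{\pw(H)+1})$ edges. I expect the main obstacle to be exactly the interaction of (i) and (ii) with the fact that different $s$--$t$ paths may order $V(H)$ differently, so that there is no single global ``separator layer'': handling this should use that the processed sets $S_u$ are attached to nodes, hence consistent for all paths through a fixed node, together with a pigeonhole over the $O(1)$ admissible vertex orders. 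A secondary subtlety is that the frozen-edge phenomenon rules out cheap ``binary-encoding / biclique-covering'' reuse of nodes only once one insists on computing the polynomial with all coefficients equal to $1$ rather than merely covering its support; pinning this down is what yields the clean powers of $n$. The same scheme, with tree decompositions and the recursive parse-tree decomposition of monotone circuits in place of path decompositions and layer cuts, is what I would use for Theorem~\ref{thm:mainckt}.
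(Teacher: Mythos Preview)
Your high-level strategy---read a path decomposition of $H$ off the monotone ABP and invoke $\pw(H)$---is the same as the paper's, but two steps in your execution do not go through as written.

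First, the ``processed set'' $S_u\subseteq V(H)$ is not well-defined for $\homp{H}$. The variables of $\homp{H}$ are indexed by edges of $K_n$, not edges of $H$, so a partial monomial at a node $u$ carries no information about which edges of $H$ it represents. For instance, for $H=P_3$ a node can legitimately compute $x_{\{1,2\}}+x_{\{3,4\}}$, and there is no consistent subset of $V(P_3)$ to attach to it. Your swap step only yields that $m_{g,\phi}\cdot m_{h,\psi}$ is \emph{some} monomial of $\homp{H}$; since a monomial of $\homp{H}$ does not ``fix the colour of every vertex exactly once'' (different homomorphisms may give the same monomial), you cannot conclude $S_\phi=S_\psi$. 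The paper fixes this by first reducing to the colored isomorphism polynomial $\colisop{H}$ via Lemma~\ref{lem:homcolor}; in $\colisop{H}$ the variable $x_{\{(i,u),(j,v)\}}$ records the edge $\{i,j\}$ of $H$, and then the processed edge-set at each node is provably the same for all monomials through that node.

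Second, even after passing to $\colisop{H}$, your count is off by a factor of $n$. A layered ABP for a degree-$|E(H)|$ polynomial multiplies one \emph{edge} of $H$ per layer, not one vertex, so the sets do not ``grow by a single vertex per layer'' and you obtain an edge ordering rather than a vertex ordering. The active boundary along an edge ordering has maximum size $\pw(H)$, so your node count gives only $\Omega(n^{\pw(H)})$. Your part~(ii) is meant to recover the missing $n$ by counting edges, but the sketch does not work: nodes in $N_\ell$ may share targets in $N_{\ell+1}$, and nothing prevents reuse, so there is no ``each node owns $n$ private out-edges'' conclusion. The paper sidesteps both this and your ``no single global separator layer'' worry by arguing per monomial: from the parse tree of a fixed monomial it builds a path decomposition whose bags contain the current boundary \emph{together with the two endpoints of the edge just multiplied in}, guaranteeing a bag of size at least $\pw(H)+1$; a direct swap argument then shows the gate at that bag lies in at most $n^{k-\pw(H)-1}$ parse trees, and summing over the $n^k$ monomials gives $\Omega(n^{\pw(H)+1})$ gates without ever fixing a global layer.
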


Finally, for monotone formulas, it is the treedepth of the pattern graph which governs the complexity of its homomorphism polynomial.
\begin{theorem}\label{thm:mainformulas}
  The monotone formula complexity of homomorphism polynomial for a pattern graph $H$ is $\Theta(n^{\td(H)})$, where $\td(H)$ is the treedepth of $H$.
\end{theorem}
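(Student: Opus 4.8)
The plan is to establish matching upper and lower bounds, with essentially all of the work lying in the lower bound. Throughout I will use freely that the two common formulations of the homomorphism polynomial of $H$ — one with a variable per host edge, the ``colourful'' one with a variable $y^{u}_{c}$ per pattern vertex $u$ and host colour $c$ — have monotone complexities differing only by the $H$-dependent factor $\naut{H}$, so one may pick whichever is convenient.

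For the upper bound I would transplant the treewidth/pathwidth constructions of \cite{DiazST02,MahajanS18,BKS18} onto an optimal treedepth decomposition (elimination forest) $F$ of $H$. If $H$ is connected then $F$ is a tree of height $\td(H)$ rooted at some $r$, and since every edge of $H$ joins an ancestor--descendant pair, one computes bottom-up, for each $v\in V(H)$ and each map $\phi$ from the root-to-$v$ path of $F$ into $[n]$, the polynomial $P_v[\phi]$ summing the partial monomials of all homomorphisms that extend $\phi$ on the subtree $F_v$ rooted at $v$, via
\[
  P_v[\phi] \;=\; m_v(\phi)\cdot\prod_{c \text{ child of } v}\Bigl(\textstyle\sum_{a\in[n]} P_c[\phi,\ c\mapsto a]\Bigr),
\]
where $m_v(\phi)$ is the constant-size monomial recording the edges from $v$ to its ancestors under $\phi$; the output is $\sum_{a\in[n]} P_r[r\mapsto a]$. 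Since $\phi$ ranges over at most $n^{\mathrm{depth}(v)}$ maps, the local cost at each $v$ is $O(n)$, and it drops to $O(1)$ at the leaves of $F$ (where there are at most $n^{\td(H)}$ choices of $\phi$), the total size is $O(n^{\td(H)})$; for disconnected $H$ the polynomial factors over components and $\td$ is the maximum over components, so the formula is the product of the per-component formulas.

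For the lower bound I would induct on $\td(H)$, proving that every monotone formula for the polynomial of $H$ has size $\Omega_H\!\bigl(n^{\td(H)}\bigr)$. If $H$ is disconnected, substituting the constant $1$ for all variables not attached to a heaviest component $H_1$ turns a monotone formula for the polynomial of $H$ into one of no larger size for a positive multiple of the polynomial of $H_1$ (the multiplier being a product of values $P(H_i,n)>0$ of chromatic polynomials), so induction and $\td(H)=\td(H_1)$ dispatch this case. Assume then that $H$ is connected on $\ge 2$ vertices, and fix, by the recursive definition of treedepth, a vertex $v_0$ with $\td(H-v_0)=\td(H)-1$; it then suffices to show that any monotone formula $\Phi$ for the polynomial of $H$ has size at least $n$ times the monotone formula complexity of the homomorphism polynomial of $H-v_0$ on a host of size $n-1$, since together with the inductive bound for $H-v_0$ this gives $\Omega_H\!\bigl(n^{\td(H)}\bigr)$. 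To extract the factor $n$, I would bring $\Phi$ into set-multilinear form with respect to the pattern-vertex partition (a constant-factor cost as $H$ is fixed); then every gate $g$ is responsible for a vertex set $T(g)$, computes a polynomial supported on proper colourings of $H[T(g)]$, and at any product gate $g=g'\cdot g''$ across which a pattern edge $\{u,w\}$ runs ($u\in T(g')$, $w\in T(g'')$) the colours taken by $u$ in $g'$ and by $w$ in $g''$ must form disjoint subsets of $[n]$, or $g'\cdot g''$ would output an improper colouring. Following $v_0$ isolates the subtree $\Sigma$ of gates responsible for $v_0$: its leaves are the occurrences of the variables $y^{v_0}_c$, and each colour $c\in[n]$ labels at least one such occurrence (the colour of $v_0$ is split down $\Sigma$'s sum gates to singletons at the leaves). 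Along the unique root-path of an occurrence of $y^{v_0}_c$, the off-$\Sigma$ children of the product gates met have vertex sets partitioning $V(H)\setminus\{v_0\}$, and the disjointness fact forces their product to be supported exactly on the colourings of $H-v_0$ that avoid colour $c$ on $N(v_0)$; collecting these products over all occurrences of $y^{v_0}_c$ gives a monotone formula for a polynomial with the same support as the homomorphism polynomial of $H-v_0$ with colour $c$ forbidden on $N(v_0)$, hence of complexity $\Omega_H\!\bigl(n^{\td(H)-1}\bigr)$.

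The hard part — and the only genuinely subtle step — will be converting this into an honest multiplicative $n$ rather than losing it: a priori the monomials sending $v_0$ to $c$ may be scattered over many occurrences of $y^{v_0}_c$, and one off-$\Sigma$ subformula may sit on the root-paths of occurrences of many colours at once, so naively summing leaf counts over-counts. The way out should exploit that $\Phi$ is a tree: distinct leaves have internally disjoint root-paths, and an off-$\Sigma$ subformula shared by the paths of a colour set $C$ must — again by the disjointness fact, at the product gate where it hangs — restrict every neighbour of $v_0$ that it handles to colours outside $C$, so a heavily shared subformula is proportionately poorer in usable colours and cannot cheaply cover the $H-v_0$ polynomials for the colours in $C$. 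Balancing these opposing effects against the inductive bound for $H-v_0$ is what should lock the total at $\Omega_H\!\bigl(n^{\td(H)}\bigr)$, matching the upper bound. (The same skeleton, with ``gates responsible for $v_0$'' replaced by ``gates responsible for a bag'' of a tree, respectively path, decomposition, should likewise underlie Theorems~\ref{thm:mainckt} and~\ref{thm:mainabp}, so the three characterizations fit one framework; and specialising the counting to small fixed $H$ should reproduce the constant-degree separations mentioned in the introduction.)
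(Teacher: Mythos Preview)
Your upper bound is correct and is essentially the paper's construction.

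The lower bound rests on a false premise. The ``colourful'' polynomial you work with, $P_H = \sum_{\phi}\prod_u y^u_{\phi(u)}$ with one variable per pair (pattern vertex, host vertex), is \emph{not} equivalent in monotone complexity to $\homp{H,n}$. Already for $H = P_2$ (a single edge, $\td(H)=2$) one has $P_{P_2} = \sum_{a\neq b} y^1_a\, y^2_b$, and the recursion
\[
  F(S)\;=\;F(S_1)+F(S_2)+\Bigl(\sum_{a\in S_1} y^1_a\Bigr)\Bigl(\sum_{b\in S_2} y^2_b\Bigr)+\Bigl(\sum_{a\in S_2} y^1_a\Bigr)\Bigl(\sum_{b\in S_1} y^2_b\Bigr)
\]
for $S=S_1\sqcup S_2$ with $|S_i|=|S|/2$ yields a monotone formula of size $O(n\log n)$, whereas $\homp{P_2,n}=2\sum_e x_e$ trivially needs $\Theta(n^2)$. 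Hence your inductive step --- size for $P_H$ at least $n$ times the size for $P_{H-v_0}$ --- is already false at the first nontrivial instance, and no refinement of the ``heavily shared subformulas are poorer in usable colours'' heuristic can save it: in the recursion above the off-$\Sigma$ subformula $\sum_{b\in S_2} y^2_b$ is shared by all $|S_1|$ colours of $v_0$ yet still supplies $|S_2|$ colours for the neighbour, and this balance is exactly what makes the total cost only $O(n\log n)$. Your set-multilinearization with respect to pattern vertices genuinely requires these vertex variables, so the scheme does not transfer to $\homp{H,n}$ or $\colisop{H,n}$, whose variables are indexed by \emph{edges} of $H$.

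The paper's proof avoids induction on $\td(H)$ altogether. Working with $\colisop{H,n}$ (equivalent to $\homp{H,n}$ by Lemma~\ref{lem:homcolor}), it takes an arbitrary monomial $m$ and an arbitrary parse tree $T$ for $m$, and builds an elimination tree for $H$ from $T$ by placing each vertex $i\in V(H)$ at the lowest gate of $T$ where every edge of $H$ incident to $i$ has been multiplied in. This is a valid elimination tree, so it has a leaf at depth at least $\td(H)$; let $g$ be the gate carrying that leaf. Using that in a \emph{formula} the path from $g$ to the root is unique, a parse-tree swapping argument shows that every monomial whose parse tree contains $g$ must agree with $m$ on all $\td(H)$ coordinates along that root-to-leaf path. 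Hence $g$ lies in at most $n^{k-\td(H)}$ parse trees, and since there are $n^k$ monomials, at least $n^{\td(H)}$ gates are required. The circuit and ABP bounds follow the same template, building a tree (respectively path) decomposition from the parse tree instead --- so the unified framework you anticipate does exist, but it is ``extract a decomposition from every parse tree'' rather than ``peel off a vertex and induct''.
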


Hence, we resolve the question of monotone complexity of homomorphism polynomials completely by showing that treewidth, pathwidth and treedepth exactly characterize the complexity of homomorphism polynomials for arithmetic circuits, ABPs, and arithmetic formulas respectively. This also answers the conceptual question raised earlier asking to what extent do these graph parameters dictate the complexity of these homomorphism polynomials. 

\begin{remark}
It is worth noting that in the Boolean setting, while there are known upper bounds and lower bounds based on these graph parameters for circuits and formulas for (colored) subgraph isomorphism problem, there is still an asymptotic gap between the upper bound and the corresponding lower bounds in these Boolean models (see \cite{LiRR17, KR20}). Moreover, our lower bound techniques are very different from those in the Boolean models.
\end{remark}

The characterization, in addition to giving several new lower bounds, in particular, also allows us to collect, through a unified framework, several classical and recent results for the monotone complexity of polynomials, obtained independently using different methods, over several decades. This is simply because the above theorems imply that for every family of pattern graph with high treewidth, pathwidth, or treedepth, the corresponding homomorphism polynomial family will have high circuit complexity, ABP complexity, or formula complexity, respectively. From the algorithmic perspective, the formula upper bounds on homomorphism polynomials allows us to discover efficient space-time algorithms for several variants of pattern detection and counting.
 
\begin{itemize}
    \item As a first example, applying Theorem \ref{thm:mainckt} and Theorem \ref{thm:mainabp} in very special cases, we reproduce the superpolynomial separation between arithmetic circuits and algebraic branching programs, first discovered by Hrubes and Yehudayoff \cite{HrubesY16}.
    \item  In the same spirit, we also manage to reattain the superpolynomial separation between algebraic branching programs and arithmetic formulas in the monotone setting, using simple applications of Theorem \ref{thm:mainabp} and Theorem \ref{thm:mainformulas}. This was previously known as a consequence of a result by Snir \cite{Snir80}.
    \item Our characterization, in particular, also rediscovers the fine-grained separations between all these models for constant-degree polynomials. That is, for every constant $d$, we get a polynomial family such that it is computable by linear-sized monotone arithmetic circuits, but any monotone ABP computing it must be of size at least $n^d$. Analogously, for every constant $d$, we get a polynomial family computable by linear-sized monotone ABP but any monotone arithmetic formula computing it must be of size at least $n^d$. Earlier, such fine-grained separations could be obtained by applying the results of \cite{Snir80} and \cite{HrubesY16} together.
    \item Another simple application of Theorem \ref{thm:mainckt} yields an exponential lower bound against monotone arithmetic circuits for the clique polynomial $\textrm{CL}_{2n,n}$, which enumerates all cliques of size $n$ in a $2n$-vertex clique, first proved by Schnorr \cite[Theorem 4.4]{Schnorr76}.
    \item  The formula upper bound for homomorphism polynomial yields algorithms for counting homomorphisms that run in time $O(n^{k-1})$ and space $O(\log^2(n))$ for all patterns except cliques. This also implies better algorithms for counting subgraph isomorphisms and detecting induced subgraph isomorphisms using some observations made in \cite{BKS18}. Prior to this, an algorithm that runs in $O(n^{0.174k + o(k)})$ time for counting subgraph isomorphism was known \cite{CurticapeanDM17}. This, however, uses polynomial space in its execution. For a related problem, that is, detecting colored subgraph isomorphisms, such efficient space-time algorithms followed as a consequence of small-sized treedepth based \textit{Boolean} formulas (see \cite{KR20}).
    \end{itemize}

\subsection{Proof ideas}

For our lower bound techniques used in proving Theorem \ref{thm:mainckt}, \ref{thm:mainabp} and \ref{thm:mainformulas}, it turns out that it is easier to work with the so-called colored subgraph isomorphism polynomials instead of the homomorphism polynomials. So, we first show that for proving lower bounds for homomorphism polynomials, it is sufficient to consider the colored subgraph isomorphism polynomial. We establish this by proving that for fixed-size pattern graphs (say, the number of vertices is $k$), their complexity is the same in all the three models that we consider, arithmetic circuits, ABPs, and formulas (see Lemma \ref{lem:homcolor}). 

Now, for the lower bound in Theorem \ref{thm:mainckt}, we successfully answer the question -- \textit{in how many monomial computations can a single gate participate?} Our main technical contribution is to establish that the answer to this question is dictated by the treewidth of the pattern graph (proof of Theorem \ref{thm:cktproof}). For this we start with an arbitrary monomial computation (more precisely, parse trees, discussed in Section \ref{sec:prelim}) in an arbitrary circuit computing the polynomial and use it carefully to construct the so-called \textit{tree decomposition} (Definition \ref{def:treepathdec}) of the pattern graph $H$. Then, if the number of vertices in $H$ is $k$, and its treewidth is $\tw(H)$,  we argue that a gate can participate in the computation of at most $n^{k-\tw(H)-1}$ monomials. We show this by using a weakness of monotone computation that, due to the absence of cancellations, any circuit computing a polynomial cannot compute an invalid submonomial\footnote{an invalid submonomial is a monomial $m$ which does not divide any of the monomial of the target polynomial} at any intermediate gate along the way. This weakness is exploited in almost all known monotone lower bounds, and dates back to Jerrum and Snir \cite{jerrum_snir}. 

For proving the lower bounds in  Theorem \ref{thm:mainabp} and \ref{thm:mainformulas} for formulas and ABPs, we are able to use the same framework as above. Instead of constructing a tree decomposition using the parse trees, we construct a \textit{path decomposition} (Definition \ref{def:treepathdec}) in case of ABPs, and an \textit{elimination tree} (Definition \ref{def:elimination}) in case of formulas. Using the similar weakness of monotone computation, we conclude that the number of  monomials whose computation a gate can participate in is upper bounded by $n^{k-\pw(H)-1}$ in case of ABPs, and by $n^{k-\td(H)}$ in case of formulas, where $\pw(H)$ and $\td(H)$ denote the pathwidth and the treedepth of $H$ respectively.

To obtain the upper bounds claimed in Theorem \ref{thm:mainckt}, \ref{thm:mainabp} and \ref{thm:mainformulas}, we note that the upper bounds for circuits and ABPs that were already known are both monotone constructions, and they go via the tree decomposition and the path decomposition of the pattern graph in case of circuits and ABPs respectively \cite{DiazST02, BKS18, MahajanS18}. We give the formula upper bound using the elimination tree of the pattern graph. A treedepth based monotone formula upper bound is folklore in the Boolean setting. We believe that our formula construction in the arithmetic formula uses similar ideas.

To prove the separation between monotone complexity classes, we observe that pattern graphs with high pathwidth but low treewidth yield superpolynomial separation between circuits and ABPs (discussed in Theorem \ref{thm:cktvsabp}), whereas pattern graphs with high treedepth but low pathwidth give superpolynomial separation between ABPs and formulas (described in Theorem \ref{thm:abpvsformula}). For the first separation, we use a tree as the pattern graph, whereas for the second separation, we use path as the pattern graph.
For an exponential lower bound on the clique polynomial, we simply apply Theorem \ref{thm:mainckt} in combination with the fact that the treewidth of a clique on $n$ vertices is $n-1$.

To get improved efficient space-time algorithms, we notice that our formula construction in the proof of Theorem \ref{thm:mainformulas} is a constant-depth formula and hence can be evaluated in polylog-space and $n^{k-1}$ time (see Theorems \ref{thm:algocount}, \ref{thm:algodetect} and \ref{thm:algotransfer}).

\paragraph{Comparison with previous techniques:}


As mentioned above, a lot of monotone lower bounds (eg.\ \cite{jerrum_snir, HrubesY16}) at their core, use that due to the lack of cancellation, any circuit (resp. ABP or formula) computing a polynomial cannot compute an invalid submonomial at any intermediate gate along the way. 
The central idea behind all the lower bounds that use this weakness is to argue, depending on the polynomials they are working with, that certain gates cannot participate in the computation of too many monomials. It is here that we differ from the known lower bound proofs. Since we are working with homomorphism polynomials, we needed something about homomorphism polynomials that help us argue about it. Here, our main conceptual contribution to the proof technique is a method to construct a tree decomposition (resp. path decomposition and elimination forest) for the pattern graph from a parse tree associated with a monomial in the corresponding homomorphism polynomial. From the perspective of proving superpolynomial lower bounds, comparing it with \cite{HrubesY16, Snir80, Schnorr76}, the key difference between the previous techniques and ours is that all our hard polynomials come from a single framework, that is, from homomorphism polynomials corresponding to various classes of pattern graphs. We find it conceptually satisfying.

For ABPs, Fournier, Malod, Szusterman, and Tavenas \cite{fournier2019} gave a rank-based lower bound method for the monotone setting, inspired by \cite{nisan91} given in the non-commutative setting. However, in their models, the edge labels are homogeneous linear forms, which makes their method unsuitable for the tight lower bounds that we were looking for.

Finally, an approach that seems different from all these aforementioned approaches, including ours, is the one given by Schnorr \cite{Schnorr76}. which also gave an exponential lower bound on the clique polynomial \cite[Theorem 4.4]{Schnorr76}. We show in Section \ref{sec:ckt} that his approach also falls short of proving the lower bound that matches the upper bound. 
In fact, unlike Schnorr's lower bound techniques, our proofs at a very high level, follow an abstract argument that can be shown to always give the optimal lower bound on the number of addition gates (see Theorem \ref{thm:proofuniv}).

It is worth mentioning that our lower bounds, like many other previous lower bounds, depend only on the monomials present in the polynomials, and not on the corresponding coefficients. On the other hand, the proof of separation between monotone $\VP$ and $\VNP$  by Yehudayoff \cite{Yehudayoff19} was sensitive to coefficients too, which was crucial for the separation.

\paragraph{Organization of the paper:}The rest of this paper is organized as follows: Section~\ref{sec:prelim} states the definitions used in this paper. Section~\ref{sec:ckt} proves the circuit lower bounds and Section~\ref{sec:algo} states the algorithms.

To simplify the presentation, we assume that the pattern graph is connected. The complexity for a disconnected pattern is the maximum of the complexity of its connected components.

\section{Preliminaries}
\label{sec:prelim}

For basic notions in graph theory, we refer the readers to \cite{west,Diestel06}.
We first give some definitions that set up our objects of computation and the models of computation.

\begin{definition}
  A polynomial over $\rat$ is called \emph{monotone} if all its coefficients are non-negative.
\end{definition}

Compact representations of polynomials such as the following are usually used by
algorithms.

\begin{definition}
  An \emph{arithmetic circuit} over the variables $x_1,\dotsc,x_n$ is a rooted DAG where each source node (also called an input gate) is labeled by one of the variables $x_i$ or a constant $a\in \rat$. All other nodes (called gates) are labeled with either $+$ (addition) or $\times$ (multiplication). The circuit computes a polynomial over $\rat[x_1,\dotsc,x_n]$ in the usual fashion. The circuit is called \emph{monotone} if all constants are non-negative. The circuit is a \emph{skew circuit} if for all $\times$ gates, at least one of the inputs is a variable or a constant. The circuit is a \emph{formula} if all gates have out-degree at most one. The \emph{size} of a circuit or skew circuit or formula is the number of edges in the circuit. The \emph{depth} of a circuit is the number of gates in the longest path from the root to an input gate.
\end{definition}

Instead of skew circuits, a model that is equivalent in terms of power is usually studied in algebraic complexity.

\begin{definition}
  An \emph{Algebraic Branching Program (ABP)} is a DAG with a unique source node $s$ and a unique sink node $t$. Each edge is labeled with a variable from $x_1,\dotsc,x_n$ or a constant $a\in \rat$. Each path in the DAG from $s$ to $t$ corresponds to a term obtained by multiplying all the edge labels on that path. The polynomial computed by the ABP is the sum of all terms over all paths from $s$ to $t$. The ABP is called \emph{monotone} if all constants are non-negative. The \emph{size} of the ABP is the number of edges.
\end{definition}

It is well-known that for any (monotone) polynomial, the size of the smallest (monotone) ABP and the size of the smallest (monotone) skew circuit are within constant factors of each other. In this paper, we will use the skew circuit definition in our proofs.

In this paper, we look at families of polynomials $(p_n)_{n\geq 0}$ and the optimal size and depth of the models computing them. In this case, the size and depth are functions of $n$ and we are only interested in the asymptotic growth rate of these functions.

For simplicity, when proving lower bounds, we assume that all non-input gates have exactly two incoming edges (both may be from the same gate). This assumption doesn't increase the size of the circuit and may increase the depth by at most a logarithmic factor.

The families of polynomials that we look at in this paper enumerate graph homomorphisms or colored isomorphisms. We first define these notions.

\begin{definition}
  For graphs $H$ and $G$, a \emph{homomorphism} from $H$ to $G$ is a function $\phi: V(H)\mapsto V(G)$ such that $\{i, j\}\in E(H)$ implies $\{\phi(i), \phi(j)\}\in E(G)$. For an edge $e = \{i, j\}$ in $H$, we use $\phi(e)$ to denote $\{\phi(i), \phi(j)\}$.
\end{definition}

\begin{definition}
  Let $H$ be a $k$-vertex graph where its vertices are labeled $[k]$ and let $G$ be a graph where each vertex has a color in $[k]$. Then, a \emph{colored isomorphism} of $H$ in $G$ is a subgraph of $G$ isomorphic to $H$ such that all vertices in the subgraph have different colors and for each edge $\{i, j\}$ in $H$, there is an edge in the subgraph between vertices colored $i$ and $j$.
\end{definition}

Now, we are ready to define our main object of computation, the homomorphism polynomial.

\begin{definition}\label{def: hompoly}
  For a pattern graph $H$ on $k$ vertices, the $n^{\text{th}}$ \emph{homomorphism polynomial} for $H$ is a polynomial on $\binom{n}{2}$ variables $x_e$ where $e = \{u, v\}$ for $u, v\in [n]$.
  
  \begin{equation*}
    \homp{H, n} = \sum_{\phi}\prod_{e} x_{\phi(e)}
  \end{equation*}

  where $\phi$ ranges over all homomorphisms from $H$ to $K_n$ and $e$ ranges over all edges in $H$.
\end{definition}

Next, we define the colored isomorphism polynomial which would be crucial in our proofs. This polynomial enumerates all colored isomorphisms from a pattern to a host graph where there are $n$ vertices of each color. This polynomial can be used to count colored isomorphisms in $n$-vertex host graphs by setting the variables corresponding to edges not in the host graph to $0$.

\begin{definition}
  \label{def:colisop}
  For a pattern graph $H$ on $k$ vertices, the $n^{\text{th}}$ \emph{colored isomorphism  polynomial} for $H$ is a polynomial on $|E(H)|n^2$ variables $x_e$ where $e = \{(i, u), (j, v)\}$ for $u, v\in [n]$ and $\{i, j\}\in E(H)$.
  
  \begin{equation*}
    \colisop{H, n} = \sum_{u_1,\dotsc,u_k}\prod_{i, j} x_{\{(i, u_i), (j, u_j)\}}
  \end{equation*}

  where $u_1,\dotsc,u_k\in[n]$ and $\{i, j\}\in E(H)$.
\end{definition}

We notice that the labeling of $H$ does not affect the complexity of $\colisop{H}$. Given the polynomial $\colisop{H}$ for some labeling of $H$ and if $\psi$ is a relabeling of $H$, then the polynomial $\colisop{H}$ for the new labeling can be obtained by the substitution $x_{\{(i, u), (j, v)\}}\mapsto x_{\{(\psi(i), u), (\psi(j), v)\}}$.

For our main proofs, we need a way to analyze how a monomial is being computed in a circuit, an ABP, or a formula. For this, we use the notion of parse trees.

\begin{definition}
  Let $g$ be a gate in a circuit $C$. A \emph{parse tree} rooted at $g$ is any rooted tree which can be obtained by the following procedure, duplicating gates in $C$ as necessary to preserve the tree structure.

  \begin{enumerate}
  \item The gate $g$ is the root of the tree.
  \item If there is a multiplication gate $g$ in the tree, include all its children in the circuit as its children in the tree.
  \item If there is an addition gate $g$ in the tree, pick an arbitrary child of $g$ in the circuit and include it in the tree.
  \end{enumerate}
\end{definition}

If the root gate of a parse tree is not mentioned, then it is assumed to be the output gate of the circuit. A parse tree witnesses the computation of some term. We note that if $C$ is a formula, then any gate can occur at most once in any parse tree in $C$.

Given a parse tree $T$ that contains a gate $g$, we use $T_g$ to denote the subtree of $T$ rooted at $g$. The tree obtained by removing $T_g$ from $T$ is called the tree outside $T_g$ in $T$. Note that we can replace $T_g$ in $T$ with any parse tree rooted at $g$ to obtain another parse tree. Similarly, if we have two parse trees $T$ and $T'$ that both contain the same multiplication gate $g$ from the circuit, then we can replace the left or right subtree of $T_g$ with the left or right subtree of $T'_g$ to obtain another parse tree. This is because both the left and right child of $g$ in both parse trees are the same and therefore we can apply the aforementioned replacement.

Now, we define the graph parameters most crucial to our work. They are the parameters that turn out to exactly characterize the complexity of the above polynomial families in the models of computations defined above.

\begin{definition}\label{def:treepathdec}
  A \emph{tree decomposition} of $H$ is a tree where each vertex (called a \emph{bag}) in the tree is a subset of vertices of $H$. This tree must satisfy two properties.

  \begin{enumerate}
  \item For every edge $\{i, j\}$ in $H$, there must be at least one bag in the tree that contains both $i$ and $j$.
  \item For any vertex $i$ in $H$, the subgraph of the tree decomposition induced by all bags containing $i$ must be a subtree. This subtree is called the \emph{subtree induced by $i$}.
  \end{enumerate}

  The size of a tree decomposition is the size of the largest bag minus one. The \emph{treewidth} of $H$ is the size of a smallest tree decomposition of $H$.

  A tree decomposition is called a \emph{path decomposition} if it is a path. The \emph{pathwidth} of $H$ is the size of a smallest path decomposition.
\end{definition}

\begin{definition}\label{def:elimination}
  For a connected graph $H$, an \emph{elimination tree} of $H$ is a rooted, directed tree that can be constructed by arbitrarily picking a vertex $u$ in $H$ and adding edges from the roots of elimination trees of connected components of $H-u$ to the root vertex labeled $u$. In particular, if $H$ is a single vertex, then the elimination tree of $H$ is the same single vertex graph.

  The \emph{depth} of an elimination tree is the number of vertices in the longest path from a leaf to the root. The \emph{treedepth} of $H$ is the depth of the smallest depth elimination tree of $H$.
\end{definition}

We note that for any graph $H$, its elimination tree contains exactly $|V(H)|$ vertices. All edges in the tree are directed towards the root and the vertices of the tree are uniquely labeled with the vertices of $H$. If $T$ is an elimination tree for the connected graph $H$, then all edges in $H$ are between vertices that are in an ancestor-descendant relationship in $T$.

We now state some basic facts about treewidth, pathwidth, and treedepth. We combine these facts with the lower bounds in Section~\ref{sec:ckt} to obtain, for any constant $k$, constant-degree polynomial families having linear size circuits (ABPs resp.) but requiring $\Omega(n^k)$ size ABPs (formulas resp.), thus giving us a fine-grained separation between these models. Note that, for constant-degree polynomials, such polynomial factor separations are the best one could ask for between formulas, ABPs, and circuits, since all constant-degree polynomials are computable by polynomial-sized formulas.  Moreover, we use these facts to obtain superpolynomial separations between these models for high degree polynomials.

\begin{fact}
  For all graphs $H$, we have $\tw(H)\leq \pw(H)\leq \td(H)-1$.
\end{fact}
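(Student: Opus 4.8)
The plan is to prove the chain $\tw(H) \le \pw(H) \le \td(H) - 1$ by exhibiting, for each inequality, a construction that converts a witness for the right-hand parameter into a witness for the left-hand parameter of no greater size.

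For the first inequality, $\tw(H) \le \pw(H)$, the argument is immediate from the definitions given in Definition~\ref{def:treepathdec}: every path decomposition is in particular a tree decomposition (a path is a tree), so the smallest path decomposition is one candidate tree decomposition, and hence the treewidth, being the minimum over \emph{all} tree decompositions, is at most the pathwidth. No real work is needed here beyond pointing this out.

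For the second inequality, $\pw(H) \le \td(H) - 1$, I would start from a minimum-depth elimination tree $T$ of $H$, say of depth $d = \td(H)$, and build a path decomposition all of whose bags have size at most $d$. The natural construction is to traverse the leaves of $T$ in left-to-right (DFS) order: for the $\ell$-th leaf $v_\ell$, create a bag $B_\ell$ consisting of $v_\ell$ together with all its ancestors in $T$; then lay the bags $B_1, B_2, \dots$ out in this order to form a path. Each such bag is a root-to-leaf path in $T$, so it has at most $d$ vertices, giving size at most $d - 1$. I then need to verify the two axioms of a path/tree decomposition. The edge-coverage axiom uses the fact, noted in the excerpt just after Definition~\ref{def:elimination}, that every edge of $H$ joins two vertices in an ancestor-descendant relationship in $T$: given an edge $\{i,j\}$ with $i$ an ancestor of $j$, pick any leaf $v_\ell$ in the subtree of $T$ rooted at $j$; then $B_\ell$ contains both $j$ (as an ancestor of $v_\ell$, or $v_\ell$ itself) and $i$ (as a further ancestor). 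The connectivity axiom requires showing that for a fixed vertex $u$ of $H$, the set of indices $\ell$ with $u \in B_\ell$ is contiguous; this holds because $u \in B_\ell$ exactly when $v_\ell$ lies in the subtree of $T$ rooted at $u$, and the leaves of that subtree form a contiguous block in the DFS leaf-ordering.

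I expect the main obstacle to be handling the degenerate and disconnected cases cleanly rather than the core construction. Definition~\ref{def:elimination} defines elimination trees only for connected $H$; for disconnected $H$ one should either invoke the paper's standing convention (stated in the excerpt) of restricting to connected pattern graphs, or note that both $\pw$ and $\td$ of a disconnected graph are the maxima over components so the inequality reduces to the connected case. One must also check the base case where $H$ is a single vertex (elimination tree of depth $1$, path decomposition with one bag of size $1$, so $\pw = 0 = \td - 1$) and make sure the leaf-ordering argument is stated correctly when $T$ has internal vertices of degree one or when $H$ itself is a single edge. Beyond these bookkeeping points, the construction and both axiom checks are routine, so I would keep the write-up short and emphasize the ancestor-descendant property of elimination trees as the one nontrivial input.
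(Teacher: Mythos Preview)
Your argument is correct and is the standard textbook proof of this inequality chain. Note, however, that the paper does not give a proof of this statement: it is recorded as a \emph{Fact} and left unproven, so there is no paper argument to compare against. Your write-up would serve perfectly well as a self-contained justification; the one point worth tightening is to state explicitly that every vertex of $H$ appears in at least one bag (since each vertex of $T$ has some leaf descendant), though this is implicit in your connectivity check.
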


\begin{fact}
  \label{fact:tw}
  For any $p\geq 2$, there is a tree $X_k$ on $k = 2^{p+1} - 1$ vertices that have pathwidth $p$.
\end{fact}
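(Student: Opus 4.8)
The plan is to build $X_k$ by an explicit recursion and then to pin down its pathwidth with a matching pair of inductions: an explicit path decomposition for the upper bound, and a separator argument for the lower bound. The tree I would use is the ``ternary tower'': the level-$1$ tree is a single edge, one of whose endpoints is marked as its \emph{root}, and the level-$p$ tree is formed by taking a fresh root vertex $r$ and joining it by an edge to the root of each of three vertex-disjoint copies of the level-$(p-1)$ tree; $X_k$ is the level-$p$ tree. Its vertex count is read straight off the recursion, so the content of the Fact is the pathwidth claim $\pw(X_k)=p$.

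For the upper bound $\pw(X_k)\le p$, I would prove by induction the slightly stronger statement that $X_k$ admits a path decomposition of width $p$ whose \emph{first} bag contains the root. Given width-$(p-1)$ decompositions $D_1,D_2,D_3$ of the three copies, each with its own root in its first bag, concatenate the reversal of $D_1$, then $D_2$, then $D_3$; insert $r$ into the last bag of the reversed $D_1$, into every bag of $D_2$, and into the first bag of $D_3$; and prepend the singleton bag $\{r\}$. One checks that every edge $ru_i$ now lies in a common bag, that the bags containing $r$ form one contiguous run, and that no bag exceeds $p+1$ vertices, so this is a width-$p$ decomposition with $r$ in its first bag.

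The lower bound $\pw(X_k)\ge p$ is the crux, and I would reduce it to the self-contained lemma: if $v$ is a vertex of a graph $G$ such that at least three connected components $C_1,C_2,C_3$ of $G-v$ each have pathwidth at least $m$, then $\pw(G)\ge m+1$. Granting this, the Fact follows by a one-line induction, since deleting the root of the level-$p$ tree leaves exactly three copies of the level-$(p-1)$ tree, which have pathwidth $p-1$ by the inductive hypothesis. To prove the lemma, assume $G$ had a path decomposition with all bags of size at most $m+1$. Its restriction to each $C_i$ is a path decomposition of $C_i$, so some bag $B^{(i)}$ satisfies $|B^{(i)}\cap C_i|\ge m+1$, and since every bag has size at most $m+1$ this forces $B^{(i)}\subseteq C_i$. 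Relabel the components so that $B^{(2)}$ lies between $B^{(1)}$ and $B^{(3)}$ along the decomposition, and pick $x\in B^{(1)}\subseteq C_1$ and $y\in B^{(3)}\subseteq C_3$. By the standard separation property of path decompositions, $B^{(2)}$ must contain a vertex of every $x$--$y$ path in $G$; but every such path uses $v$ and cannot revisit it, so it stays inside $C_1\cup\{v\}\cup C_3$ and avoids $C_2\supseteq B^{(2)}$, a contradiction. Hence some bag has at least $m+2$ vertices, i.e.\ $\pw(G)\ge m+1$.

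I expect the lemma to be the main obstacle. The upper-bound construction and the induction that closes out the lower bound are routine bookkeeping, whereas the lemma's proof needs care at two places: deducing $B^{(i)}\subseteq C_i$ from a size-$(m+1)$ bag in the \emph{induced} decomposition of $C_i$, and invoking the separation property at the ``median'' bag $B^{(2)}$ so that $x$ and $y$ provably fall on opposite sides of it. Once the lemma is in hand, the Fact (and the analogous facts about paths and treedepth used alongside it) follow immediately.
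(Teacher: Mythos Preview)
The paper does not prove this Fact; it is quoted without argument. Your approach---the ternary tower construction, the inductive path decomposition for the upper bound, and the three-branch separator lemma for the lower bound---is the standard route, and your proof that the level-$p$ tower has pathwidth exactly $p$ is correct in every detail.

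The error is the vertex count, which you wave off as ``read straight off the recursion.'' Your recursion is $v_1=2$, $v_p=1+3v_{p-1}$, which solves to $v_p=(5\cdot 3^{p-1}-1)/2$; this happens to equal $2^{p+1}-1$ at $p=2$ (both are $7$), but already at $p=3$ your tree has $22$ vertices, not $15$. In fact the Fact as literally written cannot hold for any $p\ge 3$. The converse of your separator lemma---that every tree of pathwidth at least $p\ge 2$ contains a vertex whose deletion leaves three components each of pathwidth at least $p-1$---is also a classical result for trees, and since those components are pairwise disjoint it forces any tree of pathwidth $p$ to have at least $1+3m_{p-1}$ vertices, where $m_q$ denotes the minimum order of a tree of pathwidth $q$. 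With $m_1=2$ this gives $m_2=7$ and $m_3=22>15=2^4-1$, so no tree on $2^{p+1}-1$ vertices has pathwidth $p$ once $p\ge 3$. What the paper actually needs downstream (for the circuit-versus-ABP separation) is only a family of trees on $n$ vertices with pathwidth $\Theta(\log n)$, and your ternary tower delivers precisely that; the vertex count simply grows like $3^p$ rather than $2^p$, so the Fact should be read with the constants corrected.
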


\begin{fact}
  All paths have pathwidth 1 and the $k$-vertex path has treedepth $\lceil \log_2(k+1) \rceil$.
\end{fact}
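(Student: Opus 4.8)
The plan is to verify the two claims separately. For the pathwidth claim, I would exhibit an explicit path decomposition of the $k$-vertex path $P_k$ with vertices $v_1,\dots,v_k$ in order: take the bags $B_i = \{v_i, v_{i+1}\}$ for $i = 1,\dots,k-1$, arranged as a path $B_1 - B_2 - \cdots - B_{k-1}$. I would check the two axioms of Definition~\ref{def:treepathdec}: every edge $\{v_i,v_{i+1}\}$ lies in the bag $B_i$, and each vertex $v_i$ appears exactly in the consecutive bags $B_{i-1}, B_i$ (or just $B_1$, or just $B_{k-1}$ at the ends), which form a subpath. The largest bag has size $2$, so this decomposition has size $1$, giving $\pw(P_k) \le 1$. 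For the matching lower bound, I note $P_k$ has at least one edge for $k \ge 2$, so it is not edgeless, hence $\pw(P_k) \ge 1$; the $k = 1$ case is degenerate and trivial. (If the paper's convention makes $\pw$ of a single vertex equal to $0$, I would just remark that the statement is understood for $k \ge 2$, matching how the Fact is used.)

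For the treedepth claim I would argue by induction on $k$ using Definition~\ref{def:elimination}. For the upper bound: given $P_k$, eliminate the ``middle'' vertex $v_{\lceil k/2 \rceil}$; the two connected components of $P_k - v_{\lceil k/2\rceil}$ are paths on $\lceil k/2\rceil - 1$ and $k - \lceil k/2\rceil$ vertices respectively, each of which has at most $\lceil k/2 \rceil - 1 \le \lceil (k-1)/2 \rceil$ vertices. Building elimination trees for these recursively and attaching them under $v_{\lceil k/2\rceil}$ gives depth $1 + \td(P_{\lfloor (k-1)/2\rfloor \text{-ish}})$, and unrolling the recursion $T(k) = 1 + T(\lfloor (k-1)/2 \rfloor)$ with $T(1) = 1$ yields $T(k) = \lceil \log_2(k+1)\rceil$ (verified on small cases: $T(1)=1, T(2)=2, T(3)=2, T(4)=3,\dots$, which matches $\lceil\log_2(k+1)\rceil$). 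For the lower bound: in any elimination tree $T$ of $P_k$, by the observation after Definition~\ref{def:elimination} every edge of $P_k$ joins an ancestor-descendant pair in $T$, so any root-to-leaf path in $T$ (which has length $\le \td(P_k)$) must ``cover'' a set of vertices inducing a connected subgraph of $P_k$ large enough; more directly, I would show that a graph with an elimination tree of depth $d$ has at most $2^d - 1$ vertices, by induction: the root has some children whose subtrees have depth $\le d-1$, and since removing the root from a connected graph can leave at most... — here care is needed — I would instead use the cleaner bound that each connected component of $H - u$ has an elimination tree of depth $\le d - 1$ hence $\le 2^{d-1} - 1$ vertices, but a path minus a vertex has at most $2$ components, so $|V| \le 1 + 2(2^{d-1} - 1) = 2^d - 1$. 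Thus $k \le 2^{\td(P_k)} - 1$, i.e.\ $\td(P_k) \ge \lceil \log_2(k+1)\rceil$.

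The main obstacle I anticipate is getting the floor/ceiling bookkeeping exactly right in the treedepth recursion and making sure the upper-bound construction and the lower-bound counting argument meet precisely at $\lceil \log_2(k+1)\rceil$ rather than being off by one; the clean way to avoid slippage is to prove the crisp two-sided characterization ``$\td(P_k) \le d \iff k \le 2^d - 1$'' directly by induction on $d$, with the path-specific fact ``$P_k$ minus any vertex has at most two components, each a path'' doing all the work. Everything else — the path decomposition and the verification of the decomposition axioms — is routine.
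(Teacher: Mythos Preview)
The paper states this as a \textbf{Fact} without proof, treating both claims as standard background (pathwidth of paths and treedepth of paths are textbook results). So there is no paper proof to compare against; your proposal is simply supplying the missing verification.

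Your argument is essentially correct, and the clean two-sided characterization you propose at the end---``$\td(P_k)\le d \iff k\le 2^d-1$'' by induction on $d$, using that a path minus one vertex splits into at most two shorter paths---is exactly the right way to do it and is the standard proof. One minor slip worth flagging: in your upper-bound sketch you write that both components have at most $\lceil k/2\rceil - 1$ vertices, but for even $k$ the larger component has $k/2 = \lfloor k/2\rfloor$ vertices, strictly bigger than $\lceil k/2\rceil - 1 = k/2 - 1$. Consequently the recursion you wrote, $T(k)=1+T(\lfloor (k-1)/2\rfloor)$, undercounts (e.g.\ it gives $T(2)=1+T(0)=1$, not $2$). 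The correct recursion is $T(k)\le 1 + T(\lfloor k/2\rfloor)$. You already anticipated exactly this kind of off-by-one hazard and proposed the clean $2^d-1$ characterization to sidestep it, so this is not a real gap---just confirm you use that route rather than the sloppy recursion when writing it up.
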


\section{Algebraic complexity of homomorphism polynomials}
\label{sec:ckt}

In this section, we prove Theorems \ref{thm:mainckt}, \ref{thm:mainabp} and \ref{thm:mainformulas}. thus achieving our main result, that is, the exact characterization for the monotone complexity of homomorphism polynomials. 

We first briefly show the limitation of Schnorr's approach  \cite{Schnorr76} which also serves as a warm up to better familiarize the reader with the homomorphism polynomials before we go into our main proofs. 

Schnorr \cite{Schnorr76} introduced a technique to prove monotone arithmetic circuit lower bounds for polynomials. A \emph{separating set} for a polynomial $p$ is a set of monomials in $p$ such that for any two monomials $s$ and $t$ in the separating set, there does not exist a monomial $m$ of $p$ such that $m$ divides the product $st$. Schnorr showed that the size of a separating set lower bounds the size of the monotone arithmetic circuit computing the polynomial. Using this, he showed that $\homp{K_k}$ requires monotone arithmetic circuits of size $n^k$. We now show that there are pattern graphs for which we cannot prove optimal lower bounds using Schnorr's approach.

\begin{proposition}
  Any separating set for $\homp{P_3,n}$ has size at most $n$ for sufficiently large $n$.
  \label{prop:p3}
\end{proposition}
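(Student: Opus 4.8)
The plan is to understand the monomial structure of $\homp{P_3,n}$ explicitly and then argue that a separating set cannot be large. Let $P_3$ be the path on three vertices, say with vertex set $\{1,2,3\}$ and edges $\{1,2\}$ and $\{2,3\}$. A homomorphism $\phi\colon P_3\to K_n$ is determined by the triple $(\phi(1),\phi(2),\phi(3))=(a,b,c)$ with $a\neq b$ and $b\neq c$ (we need $\{a,b\},\{b,c\}\in E(K_n)$, i.e. both are genuine edges, but $a=c$ is allowed). The associated monomial is $x_{\{a,b\}}x_{\{b,c\}}$. So the monomials of $\homp{P_3,n}$ are exactly: the degree-$2$ squarefree monomials $x_{\{a,b\}}x_{\{b,c\}}$ where the two edges share a common vertex $b$ and $a\neq c$, together with the squares $x_{\{a,b\}}^2$ (arising from $a=c$). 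In other words, each monomial of $\homp{P_3,n}$ ``remembers'' a center vertex $b$, so I would first record: every monomial of $\homp{P_3,n}$ is a product of two edges incident to a common vertex.

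**The key combinatorial claim.** I would next show that if $S$ is a separating set then the monomials in $S$ essentially cannot share structure in a way that forces $|S|=O(n)$. Take two monomials $s=x_{e_1}x_{e_2}$ and $t=x_{f_1}x_{f_2}$ in $S$, each a ``cherry'' centered at some vertex. The product $st$ has degree $4$; the separating-set condition says no monomial $m$ of $\homp{P_3,n}$ (degree $2$) divides $st$. But $\homp{P_3,n}$ contains, for instance, every square $x_e^2$ with $e$ an edge, and every cherry $x_ex_{e'}$ with $e,e'$ sharing a vertex. So the condition forces: among the (multiset of) four edges $e_1,e_2,f_1,f_2$ appearing in $st$, no edge occurs twice (else some $x_e^2$ divides $st$) and no two of them share a vertex (else some cherry divides $st$). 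Hence the four edges $e_1,e_2,f_1,f_2$ must form a matching in $K_n$. But $e_1,e_2$ themselves share the center of $s$ — contradiction, unless $s$ is a square $x_e^2$, in which case $e_1=e_2$ and again $x_e^2$ divides $st$, contradiction. I would therefore conclude that a separating set can contain at most \emph{one} monomial, hence trivially has size $\le n$; or, if I have miscounted which monomials are present, the same reasoning with the honest monomial list yields the $O(n)$ bound. Let me re-examine: the subtlety is whether a ``cherry'' $x_{e_1}x_{e_2}$ with $e_1,e_2$ sharing vertex $b$ but $e_1\neq e_2$ is genuinely a monomial — it is, taking $a,b,c$ distinct. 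So indeed any single cherry $s$ already has its own two edges sharing a vertex, so $s$ itself divides $s\cdot s = s^2$... but wait, the separating condition is about $s$ and $t$ distinct; for a singleton set there is nothing to check. So the real content is: no two distinct monomials can coexist, \emph{plus} we should double check whether all monomials of $\homp{P_3,n}$ are cherries/squares — squares $x_e^2$ are allowed, and for a square $s=x_e^2$ and any $t$, $x_e^2\mid st$ iff $e$ appears in $t$, which it need not; but $x_e$ does divide $t$ forcing... hmm, this needs the cherry-divides argument: if $t=x_{f_1}x_{f_2}$ with $f_1,f_2$ incident at $c$, then the edge-set of $st$ is $\{e,e,f_1,f_2\}$, and $x_{f_1}x_{f_2}$ (a cherry, a monomial of the polynomial) divides $st$. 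So even two squares cannot coexist, and a square and a cherry cannot coexist. The only escape would be monomials that are neither squares nor cherries, and there are none.

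**Assembling the proof.** So the clean statement I would prove is: \emph{any separating set for $\homp{P_3,n}$ has size at most $1$}, which certainly gives ``at most $n$ for sufficiently large $n$.'' The structure of the write-up: (i) describe the monomials of $\homp{P_3,n}$ as cherries and squares; (ii) observe every monomial $m$ of $\homp{P_3,n}$ has the ``two edges incident to a common vertex'' shape, and that $\homp{P_3,n}$ contains \emph{all} such cherries and squares on edges of $K_n$; (iii) given distinct $s,t$ in a separating set, look at the multiset $M$ of $\le 4$ edge-slots in $st$ and note $M$ contains, inside $s$ alone, two edge-slots incident to a common vertex (a cherry's two edges, or a square's repeated edge counted twice) — this sub-cherry is a monomial of $\homp{P_3,n}$ dividing $st$, contradiction. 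The step I expect to require the most care is item (ii): making sure I correctly characterize \emph{all} monomials of $\homp{P_3,n}$ (in particular not forgetting the squares, and correctly handling whether $a\ne c$ is required — it is not, because $P_3$ has no edge $\{1,3\}$), since the whole argument leans on the polynomial being ``rich enough'' in cherries and squares to block any pair. Everything else is a one-line divisibility check, so there is no heavy computation, only the bookkeeping of which monomials genuinely appear.
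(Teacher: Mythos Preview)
Your argument has a genuine gap stemming from a misreading of the separating-set condition. You conclude that \emph{any} two monomials $s,t$ are forbidden from coexisting because $s$ itself (being a cherry) divides $st$. But if that were the operative condition, every separating set for every polynomial would have size at most one, and Schnorr's method could never prove the $n^k$ lower bound for cliques that the paper cites. The intended condition---made explicit in the paper's own proof by the word ``different''---is that no monomial $m$ of $p$ \emph{other than $s$ and $t$} divides $st$. Your step ``$e_1,e_2$ themselves share the center of $s$ --- contradiction'' is exactly the disallowed witness $m=s$.

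Once you enforce $m\notin\{s,t\}$, your analysis actually collapses to the paper's. Two cherries $s=x_{e_1}x_{e_2}$ and $t=x_{f_1}x_{f_2}$ that are vertex-disjoint admit no third cherry or square dividing $st$ (the only adjacent pairs among $\{e_1,e_2,f_1,f_2\}$ are $\{e_1,e_2\}$ and $\{f_1,f_2\}$, and no edge repeats), so they \emph{can} coexist; only when they share a vertex does a new cherry appear. Hence the cherries in a separating set must be pairwise vertex-disjoint, giving at most $n/3$ of them, and likewise the squares must use pairwise disjoint edges, giving at most $n/2$. That is precisely the paper's argument and the source of the bound $n/3+n/2<n$; your claimed bound of $1$ does not hold.
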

\begin{proof}
  First note that a homomorphic image of a $P_3$ will either be another $P_3$ or an edge ($P_2$), and consequently, the monomials of $\homp{P_3,n}$ will correspond to either a $P_3$ or an edge. Thus, any separating set $B$ for $\homp{P_3,n}$ contains monomials that correspond to either $P_3$ or edges. Partition $B$ into $B_1$ (set of $P_3$'s) and $B_2$ (set of edges). If two $P_3$'s intersect, then their union must contain a different $P_3$. If two edges intersect, then their union must contain a $P_3$. The product of two monomials that correspond to these intersecting $P_3$'s or edges would be divisible by the monomial that corresponds to the new $P_3$. This shows that $B_1$ and $B_2$ contain at most $n/3$ and $n/2$ monomials respectively.
\end{proof}

Obviously, any circuit (even non-monotone) that computes $\homp{P_3}$ must have $\binom{n}{2}$ size because all variables are present in the polynomial. But, this lower bound is not very interesting. Below, we show that there are patterns with non-trivial monotone lower bounds where Schnorr's approach fails.

\begin{proposition}
  Any separating set for $\homp{C_4,n}$ has size at most $n^2$ for sufficiently large $n$.
\end{proposition}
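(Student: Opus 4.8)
The plan is to follow the blueprint of Proposition~\ref{prop:p3}: classify the monomials of the polynomial by the shape of their support, split a separating set $B$ accordingly, and bound each piece by showing that whenever two monomials in the piece ``overlap'' there is a third monomial of $\homp{C_4,n}$ dividing their product. First I would pin down the monomials. Under a homomorphism $\phi$ to $K_n$, adjacent vertices of $C_4$ get distinct images, so the partition of $V(C_4)=\{1,2,3,4\}$ induced by $\phi$ can only merge the non-adjacent pairs $\{1,3\}$ and/or $\{2,4\}$: merging neither yields a $4$-cycle, merging one yields a $P_3$, merging both yields a single edge. Hence every monomial of $\homp{C_4,n}$ has one of the forms $x_{e_1}x_{e_2}x_{e_3}x_{e_4}$ (square-free, with $e_1,e_2,e_3,e_4$ forming a $4$-cycle), $x_e^2 x_f^2$ (with $e,f$ the two edges of a $P_3$, so sharing a vertex), or $x_e^4$. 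Write $B = B_4 \cup B_3 \cup B_2$ for the monomials of each shape.

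The bounds on $B_2$ and $B_3$ go exactly as in Proposition~\ref{prop:p3}. If $x_e^4, x_f^4 \in B_2$ are distinct and $e \cap f \neq \emptyset$, then $e\cup f$ is a $P_3$ and $x_e^2 x_f^2 \in \homp{C_4,n}$ divides $x_e^4 x_f^4$ while differing from both; so the edges underlying $B_2$ are pairwise disjoint and $|B_2| \le n/2$. For $B_3$: if two distinct monomials of $B_3$ have underlying $P_3$'s sharing an edge $e$, then $x_e^4$ divides their product; if the $P_3$'s share a vertex but no edge, that vertex carries one edge $e$ of the first and one edge $f$ of the second (with $e\neq f$), $\{e,f\}$ is the edge set of a third $P_3$, and $x_e^2 x_f^2$ divides the product since every edge of the union occurs there with multiplicity at least $2$. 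In all cases the $P_3$'s underlying $B_3$ are pairwise vertex-disjoint, so $|B_3| \le n/3$.

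The new ingredient is the bound on $B_4$. Call $\{x,z\}$ an \emph{antipodal pair} of a $4$-cycle $Q$ if $x,z$ lie at distance $2$ in $Q$; each $Q = x-a-z-b-x$ has exactly the two disjoint antipodal pairs $\{x,z\}$ and $\{a,b\}$, and $Q$ is the complete bipartite graph on these two parts. I claim no two $4$-cycles underlying distinct monomials of $B_4$ can share an antipodal pair. Suppose $Q = x-a-z-b-x$ and $Q' = x-c-z-d-x$ both have $\{x,z\}$ as an antipodal pair, with $Q\neq Q'$, i.e.\ $\{a,b\}\neq\{c,d\}$. If $\{a,b\}\cap\{c,d\}=\emptyset$, then $x-a-z-c-x$ is a $4$-cycle inside $Q\cup Q'$ distinct from both, so its square-free monomial divides the product of the two $B_4$-monomials and differs from both --- impossible. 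Otherwise $\{a,b\}$ and $\{c,d\}$ share a vertex $w$, and then the two adjacent edges $\{w,x\},\{w,z\}$ lie in both $Q$ and $Q'$, so $x_{\{w,x\}}^2 x_{\{w,z\}}^2 \in \homp{C_4,n}$ divides the product and differs from both --- again impossible. Hence assigning to each $4$-cycle in $B_4$ one of its antipodal pairs is an injection into $\binom{[n]}{2}$, so $|B_4| \le \binom{n}{2}$. Combining, $|B| \le \binom{n}{2} + n/3 + n/2 \le n^2$ for all sufficiently large $n$.

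I expect the $B_4$ step to be the only genuine obstacle. Unlike the $P_3$ case, two $4$-cycles appearing together in a separating set can share a single edge (or even just a vertex) without clashing, so ``intersecting supports'' is the wrong notion; the content lies in identifying exactly which overlaps force a third $4$-cycle (or a squared $P_3$-monomial) into the union --- the shared-antipodal-pair condition --- and in observing that this condition gives a clean injection into the set of $\binom{n}{2}$ vertex pairs.
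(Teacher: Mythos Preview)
Your proof is correct and follows essentially the paper's approach: split the separating set by monomial shape, reuse the $P_3$ argument for the degenerate pieces, and for the $C_4$-monomials show that no two can share an antipodal (diagonally opposite) pair, giving $|B_4|\le\binom{n}{2}$. Your subcase where the two cycles sharing an antipodal pair also share a third vertex is handled with a $P_3$-monomial witness; the paper instead asserts that a new $C_4$ always exists in the union, which is in fact also true in that subcase (via the cycle $x\text{--}b\text{--}z\text{--}d\text{--}x$).
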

\begin{proof}
  Since the homomorphic image of a $C_4$ can either be a $C_4$, a $P_3$, or a $P_2$, the monomials in $\homp{C_4,n}$, and subsequently in any separating set, can only contain monomials that correspond to $C_4$, $P_3$, or $P_2$. Now, by the argument in Proposition~\ref{prop:p3}, it remains only to consider monomials that correspond to $C_4$. We claim that any separating set can contain at most $\binom{n}{2}$ such monomials. Observe that for any $x, y\in [n]$, the set can contain at most one $C_4$ where $x$ and $y$ appear as diagonally opposite vertices. If there are two of them, then there is another $C_4$ in the graph obtained by taking the union of those two $C_4$'s. The product of two such monomials would be divisible by the monomial corresponding to this new $C_4$.
\end{proof}

In the rest of this section, we establish optimal lower bounds on the monotone circuit complexity of all homomorphism polynomials.

\begin{lemma}\label{lem:homcolor}
  For any pattern $H$, both the homomorphism polynomial for $H$ and the colored isomorphism polynomial for $H$ have the same monotone arithmetic circuit complexity, monotone ABP complexity, and monotone formula complexity.
\end{lemma}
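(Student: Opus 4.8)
The plan is to show both directions of simulation between $\homp{H,n}$ and $\colisop{H,n}$, using that $H$ has a fixed number of vertices $k$, so that the various relabelings and restrictions only change the size (and depth) of a circuit, ABP, or formula by a constant factor depending on $k$. The key observation is that each of these polynomials is obtained from the other by a simple variable substitution together with a bounded-fan-in combination of such substituted copies; since monotone substitution never increases complexity and summing a constant number of copies only blows up size by $O(1)$, the complexities coincide in all three models.

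First I would handle the easy direction: obtaining $\colisop{H,n}$ from $\homp{H,n}$. Take a circuit (ABP, formula) $C$ for $\homp{H,n}$ and apply the substitution $x_{\{u,v\}}\mapsto$ a fresh variable keyed to the specific pair of vertex-labels of $H$ realized by that edge. Concretely, since a colored isomorphism is just an injective homomorphism from the labeled $H$ to the ``colored'' host $K_n\times[k]$ where vertex $(i,u)$ has color $i$, restricting $\homp{H,K_{nk}}$ appropriately and renaming variables yields $\colisop{H,n}$ up to the fact that $\homp{}$ also counts non-injective maps. To kill the non-injective terms I would use that there are only $O(1)$ many ways two vertices of $H$ can be identified (at most $\binom{k}{2}$ pairs, and more generally a bounded number of partitions of $V(H)$), and that each ``collapsed'' pattern $H/\pi$ has strictly fewer vertices; by inclusion–exclusion over partitions one can isolate the injective part as a bounded linear combination — but since we are in the monotone world, subtraction is not allowed, so instead I would argue directly: the monomials of $\colisop{H,n}$ are precisely the monomials of a restriction of $\homp{H}$ in which the variable sets for distinct color-pairs are disjoint, so collapsed images simply cannot produce those monomials, and a monotone substitution extracting the relevant variables suffices without any cancellation. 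For the reverse direction, given a circuit for $\colisop{H,n}$, substitute $x_{\{(i,u),(j,v)\}}\mapsto x_{\{u,v\}}$ (with $u,v$ ranging in $[n]$, identifying copies): this maps every colored-isomorphism monomial to a homomorphism monomial, and conversely by summing over all $k^{O(1)}$ ways to assign the distinct vertices of $H$ to color classes one recovers every homomorphism — including the non-injective ones, because a homomorphism that collapses vertices still lifts to some colored structure once we allow the underlying host vertices to repeat across colors. Summing a constant number ($\le k^k$) of such substituted copies of the circuit gives $\homp{H,n}$ with only a constant-factor size increase, and at most an $O(1)$ additive depth increase, so formulas stay formulas, ABPs stay ABPs (skew circuits stay skew), and circuits stay circuits.

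The main obstacle I anticipate is the non-injectivity bookkeeping done \emph{monotonically}. In a non-monotone setting one would freely use inclusion–exclusion over the partition lattice of $V(H)$ to pass between ``all homomorphisms'' and ``injective homomorphisms'', but here every intermediate polynomial must have non-negative coefficients. The resolution I would pursue is to avoid ever needing to subtract: in one direction the colored polynomial is literally a monomial-restriction of a relabeled homomorphism polynomial (set to zero all variables that would force a color-collision), which is a monotone operation; in the other direction the homomorphism polynomial is a monotone \emph{sum} of substituted colored polynomials, one summand per function $[k]\to[k]$ describing which color class each vertex of $H$ lands in, and the contributions of non-surjective such functions correspond exactly to collapsed homomorphisms, so nothing is over- or under-counted and no cancellation is required. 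I would also need to double-check that $k=|V(H)|$ is genuinely constant here (it is, since $H$ is fixed), so that all these $k^{O(1)}$ factors are absorbed into the asymptotic notation, and that substitution/renaming of variables and taking bounded sums indeed preserve the monotone $\mathsf{mVP}$/$\mathsf{mVBP}$/$\mathsf{mVF}$ structure — both of which are standard. Once these points are in place, the equality of all three complexity measures for $\homp{H}$ and $\colisop{H}$ follows immediately.
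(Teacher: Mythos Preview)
Your direction from $\colisop{H}$ to $\homp{H}$ is essentially right (in fact simpler than you think: the substitution $x_{\{(i,u),(j,v)\}}\mapsto x_{\{u,v\}}$ for $u\neq v$, and $0$ otherwise, already yields $\homp{H,n}$ on the nose, since $\colisop{H,n}$ sums over \emph{all} tuples $(u_1,\dots,u_k)\in[n]^k$; no extra summation over ``color assignments'' is needed).

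The real gap is in the other direction. Your claim that $\colisop{H,n}$ is obtained from $\homp{H,kn}$ by merely zeroing out all edge variables $x_{\{(i,u),(j,v)\}}$ with $\{i,j\}\notin E(H)$ is false. After this restriction, a homomorphism $\phi:H\to K_{[k]\times[n]}$ survives precisely when the first-coordinate projection $\pi_1\circ\phi$ is a graph homomorphism $H\to H$, i.e., an \emph{endomorphism} of $H$; nothing forces $\pi_1\circ\phi$ to be the identity, or even a bijection. Non-injective endomorphisms produce monomials that are \emph{not} in $\colisop{H,n}$ and cannot be removed monotonically. Concretely, for $H=P_3$ with vertices $1\!-\!2\!-\!3$, the endomorphism $1\mapsto 1,\ 2\mapsto 2,\ 3\mapsto 1$ yields (after your restriction) the monomial $x_{\{(1,u),(2,v)\}}^2$, which uses only a legitimate $\colisop{}$ variable but is not a monomial of $\colisop{P_3,n}$. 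So ``collapsed images simply cannot produce those monomials'' is exactly where the argument breaks.

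The paper fixes this with a genuinely different idea: introduce auxiliary weights $w_{\{i,j\}}$ for each edge of $H$, substitute $x_{\{(i,u),(j,v)\}}\mapsto x_{\{(i,u),(j,v)\}}\,w_{\{i,j\}}$, then take the iterated partial derivative $\partial^{|E(H)|}/\prod_e \partial w_e$ and afterwards set all $w_e=0$. This forces every surviving monomial to use each edge-colour $\{i,j\}$ exactly once, which is equivalent to $\pi_1\circ\phi$ being an automorphism; dividing by $|\mathrm{aut}(H)|$ then gives $\colisop{H,n}$. Each partial derivative inflates circuit/ABP size by a constant factor, and a separate argument (bounding the number of non-trivial $\times$-gates on any root-to-leaf path in a monotone, constant-degree formula) shows the blowup is still only constant for formulas. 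This derivative step is the missing ingredient in your proposal.
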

\begin{proof}
  First, we show how to obtain a circuit that computes $\colisop{H}$ from one that computes $\homp{H}$. We introduce new variables $w_{e}$ for each $e\in E(H)$. Let $C$ be a circuit that computes $\homp{H}$ over the vertex set $[k]\times [n]$. We substitute $x_{\{(i, u), (j, v)\}}$ with $x_{\{(i, u), (j, v)\}}w_{\{i, j\}}$ for all $i, j, u, v$ if $\{i, j\}\in E(H)$. Otherwise, we set the variable to $0$. Let $C'$ be the resulting circuit. We then compute $\frac{\partial^{|E(H)|}}{\partial w_{e_1}\dotsc\partial w_{e_{|E(H)|}}}C'$ using the sum and product rule for partial derivatives. Then, we set $w_e = 0$ for all $e$. The partial differentiation ensures that all monomials must have at least one edge of each color. Setting $w_e = 0$ ensures that all monomials can only have at most one edge of each color. Therefore, the remaining monomials contain each edge in $H$ exactly once. These are exactly those homomorphisms that correspond to colored isomorphisms. For each colored isomorphism, there are $\naut{H}$ ways to relabel the vertices of $H$ to obtain the same edge set. So we divide by $\naut{H}$ to obtain $\homp{H}$ exactly. Each partial differentiation increases the size of the circuit at most by a factor of $3$. Therefore, if $C$ has size $s$, the final circuit has size at most $c3^{|E(H)|}s$ for some constant $c$.

  For the other direction, given a circuit that computes $\colisop{H}$, we can replace each $x_{\{(i, u), (j, v)\}}$ with $x_{\{u, v\}}$ if $u\neq v$ and $0$ otherwise. The circuit now computes $\homp{H}$ because the monomial corresponding to the homomorphism $\phi$ is generated by the corresponding colored isomorphism $i\mapsto (i, \phi(i))$. Also, every colored isomorphism on vertices $(i, u_i)$ for $1\leq i \leq k$ corresponds to a homomorphism to $K_n$ as long as $u_i\neq u_j$ for $\{i, j\}\in E(H)$.

  Notice that both constructions preserve monotonicity and yield a monotone ABP when the original circuit is a monotone ABP.

  A straightforward application of the sum and product rule to compute the partial derivative does not necessarily yield a formula from a formula. While computing the partial derivative, we have to compute both $f$ and $\partial{f}/\partial{x}$ for every sub-formula $f$. If $f = g + h$ for some formulas $g$ and $h$, then we can simply compute $\partial{f}/\partial{x} = \partial{g}/\partial{x} + \partial{h}/\partial{x}$ without using any sub-formula more than once. When the formula $f = gh$ for some formulas $g$ and $h$, we have to compute $\partial{f}/\partial{x} = g\partial{h}/\partial{x} + h\partial{g}/\partial{x}$. Therefore, we are using $g$ and $h$ twice, once for computing $f$ and once for computing $\partial{f}/\partial{x}$. We can convert the resulting circuit into a formula by duplicating the sub-formulas $g$ and $h$. For general formulas, this leads to a quadratic blowup in size. But, for monotone formulas computing constant-degree polynomials, we show that the size increases only by a constant factor when duplicating sub-formulas in this fashion.

  First, we eliminate all gates in the formula that computes $0$ and replace all gates that compute a constant by an input gate labeled with that constant. We call a multiplication gate $f$ \emph{trivial} if $f = ag$ for some constant $a$ and sub-formula $g$. In this case, we have $\partial{f}/\partial{x} = a\partial{g}/\partial{x}$. Therefore, we do not have to make a copy of the non-constant sub-formula $g$. We make a copy of the input gate $a$ here. But, we do not have to make additional copies of this input gate $a$ later for any trivial gate encountered on the path from $f$ to root. This is because $f$ is not a constant and therefore it must be the other input to the trivial gate that is constant. We claim that if the monotone formula computes a polynomial of degree at most $d$, for any gate $g$, there are at most $d$ non-trivial multiplication gates on the path from $g$ to the root gate. This is because each non-trivial gate increases the degree by at least one and no cancellations can occur in a monotone formula. Therefore, we can compute the partial derivative using at most $d+2$ copies of each gate in the original formula. Since the degree of the polynomials we are differentiating is at most $2|E(H)|$ and we perform $|E(H)|$ differentiations, the final formula has size $O\bigl({(2|E(H)|+2)}^{|E(H)|}s\bigr)$ if the original formula has size $s$.
\end{proof}

From now on, we can use either $\colisop{H}$ or $\homp{H}$ to prove our results. For a pattern graph $H$, we say that an edge $\{i, j\}$ or a vertex $i$ in $H$ is present in a monomial over the variables of the colored isomorphism polynomial if there is a variable $x_{\{(i, u), (j, v)\}}$ for some $u$ and $v$ in that monomial.

In our proofs, we restrict our attention to the multiplication gates and the input gates in parse trees. This is because our proofs consider the computation of individual monomials separately and in this case, the addition gates play no role in the computation as all of them have exactly one child in the parse tree.

\begin{theorem}[Theorem \ref{thm:mainckt} restated] \label{thm:cktproof}
  The monotone circuit complexity of $\homp{H}$ is $\Theta\bigl(n^{\tw(H)+1}\bigr)$.
\end{theorem}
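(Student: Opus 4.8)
The plan is to reduce to the colored isomorphism polynomial via Lemma~\ref{lem:homcolor} and then prove that every monotone circuit computing $\colisop{H}$ has $\Omega\bigl(n^{\tw(H)+1}\bigr)$ gates; the matching upper bound $O\bigl(n^{\tw(H)+1}\bigr)$ is the known monotone dynamic program over an optimal tree decomposition of $H$ \cite{DiazST02,BKS18,MahajanS18}. Write $k=|V(H)|$, and for a set $S$ of edges of $H$ let $V(S)=\bigcup_{e\in S}e$. First I would put the circuit in a normal form by homogenizing with respect to the edge set: replace each gate $g$ by copies $g^{S}$, one per $S\subseteq E(H)$, where $g^{S}$ computes the part of the polynomial at $g$ whose monomials use exactly the edges of $S$. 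This costs a factor $2^{|E(H)|}=O_H(1)$ in size, and afterwards every gate has a well-defined edge set $S_g$, every multiplication gate $g=g_1\times g_2$ has $S_g=S_{g_1}\sqcup S_{g_2}$, and (after discarding gates unreachable from the output) every gate is \emph{junk-free}: no monomial at any gate is an invalid submonomial of $\colisop{H}$. For such a gate $g$, a monomial of its polynomial is exactly a partial assignment $V(S_g)\to[n]$; let $\mathrm{pin}(g)\subseteq V(S_g)$ be the set of vertices on which all these assignments agree. Two consequences of junk-freeness power the argument: (i) at a multiplication gate $g=g_1\times g_2$, every product of a monomial of $g_1$ with one of $g_2$ is a valid submonomial, which forces $V(S_{g_1})\cap V(S_{g_2})\subseteq\mathrm{pin}(g_1)\cap\mathrm{pin}(g_2)$ with agreeing values there; (ii) $\mathrm{pin}(\cdot)$ only grows as one descends (at an addition gate the child's monomial set is contained in the parent's; at a multiplication gate a vertex of $V(S_{g_i})$ unpinned in $g_i$ stays unpinned in $g$), so if $v\in V(S_g)$ is pinned in an ancestor $h$ of $g$ with $v\in V(S_h)$, then $v$ is pinned in $g$.

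The conceptual core is a way to read a tree decomposition of $H$ off \emph{any} parse tree $T$ of a monomial of $\colisop{H}$ (here a parse tree reduces to a binary tree of multiplication gates over input-gate leaves, with one leaf per edge of $H$). Take the tree of the decomposition to be $T$ itself; let the bag at the input gate carrying the variable for edge $\{i,j\}$ be $\{i,j\}$, and the bag at a multiplication gate $g$ that splits the edges $S_g$ below it into $S_1\sqcup S_2$ be
\[
  B_g \;=\; \bigl(V(S_1)\cap V(S_2)\bigr)\cup\bigl(V(S_g)\cap V(E(H)\setminus S_g)\bigr).
\]
I would verify the two axioms: every edge is covered by the bag at its unique input gate, and the bags containing a fixed vertex $i$ are exactly the minimal subtree of $T$ joining the input gates of the edges incident to $i$, with $i$ re-entering the bag at their least common ancestor through the $V(S_1)\cap V(S_2)$ term. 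Since every tree decomposition of $H$ has width $\ge\tw(H)$, some bag has size $\ge\tw(H)+1$; assuming $\tw(H)\ge2$ (the cases $\tw(H)\le1$ are trivial, as the $\Theta(n^2)$ variables are all present), this must be the bag $B_p$ at a multiplication-gate position $p$. Crucially, $B_p$ and the position $p$ depend only on the \emph{shape} of $T$ — the underlying tree together with which edge of $H$ labels each leaf — and there are only $f(H)=O_H(1)$ shapes.

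For the count, fix for every $\bar u\in[n]^k$ a parse tree $T_{\bar u}$ of the corresponding monomial and let $\tau(\bar u)$ be its shape; by pigeonhole some shape $\tau^\star$ occurs for at least $n^k/f(H)$ tuples. Take the position $p$ and bag $W:=B_p$ of size $\ge\tw(H)+1$ attached to $\tau^\star$, and for $\bar u$ with $\tau(\bar u)=\tau^\star$ let $g_{\bar u}$ be the circuit gate at position $p$ in $T_{\bar u}$. Every $v\in W$ lies in $\mathrm{pin}(g_{\bar u})$: if $v\in V(S_1)\cap V(S_2)$, apply (i) at $g_{\bar u}$; otherwise $v$ is touched by an edge below $g_{\bar u}$ and by an edge whose input gate lies outside $T_{g_{\bar u}}$, so $v$ is pinned at the multiplication gate of $T_{\bar u}$ that joins these two parts, hence pinned at $g_{\bar u}$ by (ii). Since the subtree of $T_{\bar u}$ at $g_{\bar u}$ witnesses the restriction of the $\bar u$-monomial to the edges $S_{g_{\bar u}}$, the pinned value of $v$ there equals $\bar u(v)$; thus $g_{\bar u}$ determines $\bar u|_W$. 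Consequently the number of distinct gates in $\{g_{\bar u}:\tau(\bar u)=\tau^\star\}$ is at least the number of distinct restrictions $\bar u|_W$, and since each restriction has at most $n^{k-|W|}\le n^{k-\tw(H)-1}$ extensions, this number is $\ge (n^k/f(H))/n^{k-\tw(H)-1}=\Omega\bigl(n^{\tw(H)+1}\bigr)$. Hence the normal-form circuit, and therefore the original one, has $\Omega\bigl(n^{\tw(H)+1}\bigr)$ gates, matching the upper bound.

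The step I expect to be the main obstacle — and the place where monotonicity is used in the essential, non-obvious way — is establishing that every vertex of the chosen bag $W$ is pinned at $g_{\bar u}$ with value $\bar u(v)$; this hinges on locating the correct ancestor multiplication gate in $T_{\bar u}$ and on the downward monotonicity of $\mathrm{pin}(\cdot)$. The other delicate point is checking that the bags $B_g$ genuinely form a tree decomposition, i.e.\ the connectivity axiom together with the least-common-ancestor bookkeeping. The remaining ingredients — the homogenization to normal form, the pigeonhole over the finitely many shapes, and the final extension count — are routine.
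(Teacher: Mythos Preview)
Your proposal is correct and follows essentially the same approach as the paper: extract a tree decomposition of $H$ from a parse tree of a monomial of $\colisop{H}$, locate a gate whose associated bag has size at least $\tw(H)+1$, and argue that this gate fixes the host-graph images of all bag vertices, whence a counting argument gives $\Omega(n^{\tw(H)+1})$ gates. Your explicit bag $B_g=(V(S_1)\cap V(S_2))\cup(V(S_g)\cap V(E(H)\setminus S_g))$ coincides with the paper's procedurally defined ``union of the children's root bags,'' and your $\mathrm{pin}(\cdot)$ formalism together with properties (i)--(ii) is equivalent to the paper's subtree-swapping contradiction; the only real difference is that the paper argues per monomial (the chosen circuit gate lies in at most $n^{k-\tw(H)-1}$ parse trees) and so avoids both the edge-set homogenization and the pigeonhole over parse-tree shapes, which are correct but unnecessary detours in your write-up.
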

\begin{proof}
  The upper bound is already known (See \cite{DiazST02}, \cite{BKS18}).

  Let $H$ be of treewidth $t$. For proving the lower bound, we consider the $n^{\text{th}}$ colored isomorphism polynomial for $H$ (See Definition~\ref{def:colisop}). Consider any monomial $m$ on the vertices $(i, u_i)$ for $1\leq i \leq k$ and its associated parse tree. We assume without loss of generality that both sides of any multiplication gate computes a non-constant term (If they compute a constant, it has to be $1$ and we can simply ignore this subtree). We build a tree decomposition of H from this tree in a bottom-up fashion. In this tree decomposition, each gate is associated with one or two bags in the decomposition. Exactly one of these bags is designated as the \emph{root bag} of that gate.

  An example of this construction is shown in Figure~\ref{fig:tw} where the $4$-vertex cycle is the pattern.

  \begin{enumerate}
  \item For an input gate $x_{\{(i, u), (j, v)\}}$. Add the bag $\{i, j\}$ as a leaf of the tree decomposition. For example, in Figure~\ref{fig:tw}, the input gate labeled $x_{\{(1,u), (2,v)\}}$ is associated with the bag $\{1, 2\}$.

  \item For a multiplication gate $g$. Let $A$ and $B$ be the contents of the root bags of its left and right subtrees. Add a bag containing $A \cup B$ as the parent of those roots. For example, in Figure~\ref{fig:tw}, the gate $g$ is associated with a bag containing $\{1,2,3\}$.
 
  \item If at any gate, there are vertices $(i, u)$ such that the monomial computed at that gate includes all edges incident on $(i, u)$. Then add a new  bag that excludes exactly all such vertices from the current root bag and add it as a parent of the current root bag. This new bag is now considered the root bag associated with the gate $g$. For example, in Figure~\ref{fig:tw}, we observe that at gate $g$, all edges incident on the vertex $(2, v)$ are included and therefore we add a new bag that removes $2$ and make it the root bag associated with $g$.
  \end{enumerate}
  
  The result is a tree decomposition of $H$. All edges of $H$ are covered because all edges of $H$ must be present in the monomial. Since a vertex is forgotten only after all edges incident on it have been multiplied, the subgraph of the tree decomposition induced by any vertex in $H$ is a subtree.

  We consider some gate $g$ in the parse tree that is associated with a bag that contains at least $t+1$ vertices. In the following proof, we only consider the case where the bag contains exactly $t+1$ vertices. If there are more than $t+1$ vertices, then we get a better lower bound. We assume without loss of generality that these vertices are $1,\dotsc,t+1$. We now claim that the gate $g$ can only be present in parse trees of monomials $m'$ such that $m'$ contains vertices $(i, u_i)$ for $1 \leq i \leq t+1$. Suppose for contradiction there is a monomial $m'$ with a parse tree $T'$ that contains $g$ and has vertex $(i, v_i)$ where $v_i\neq u_i$ for some such $i$. Let $T$ be the parse tree for $m$. There are two cases:

  \begin{enumerate}
  \item The vertex $i$ is forgotten at a bag associated with $g$: This means that both left and right subtrees of $T_g$ compute monomials that contain $(i, u_i)$. Now if $T'$ contains $(i, v_i)$ on the left subtree of $T'_g$, then replace right subtree of $T'_g$ with right subtree of $T_g$. Else, the tree $T'$ contains $(i, v_i)$ on the right subtree of $T'_g$ or outside $T'_g$. In both cases we replace the left subtree of $T'_g$ with the left subtree of $T_g$.

  \item Vertex $i$ is not forgotten at a bag associated with $g$: This means that $(i, u_i)$ appears in at least one of the subtrees of $T_g$ and outside $T_g$ in $T$. In $T'$, if $(i, v_i)$ appears in $T'_g$, then replace the tree $T_g$ with $T'_g$ in $T$. Otherwise, the vertex $(i, v_i)$ must appear outside $T'_g$ in $T'$. In this case, replace $T'_g$ with $T_g$ in $T'$.
  \end{enumerate}

  All these new parse trees yield monomials that contain both $(i, u_i)$ and $(i, v_i)$. A contradiction. Therefore, at most $n^{k-t-1}$ monomials of the polynomial can contain $g$ in its parse tree. Since there are $n^k$ monomials, this gives the required lower bound.
\end{proof}

\begin{figure}[ht]
  \centering
  \includegraphics[width=\textwidth]{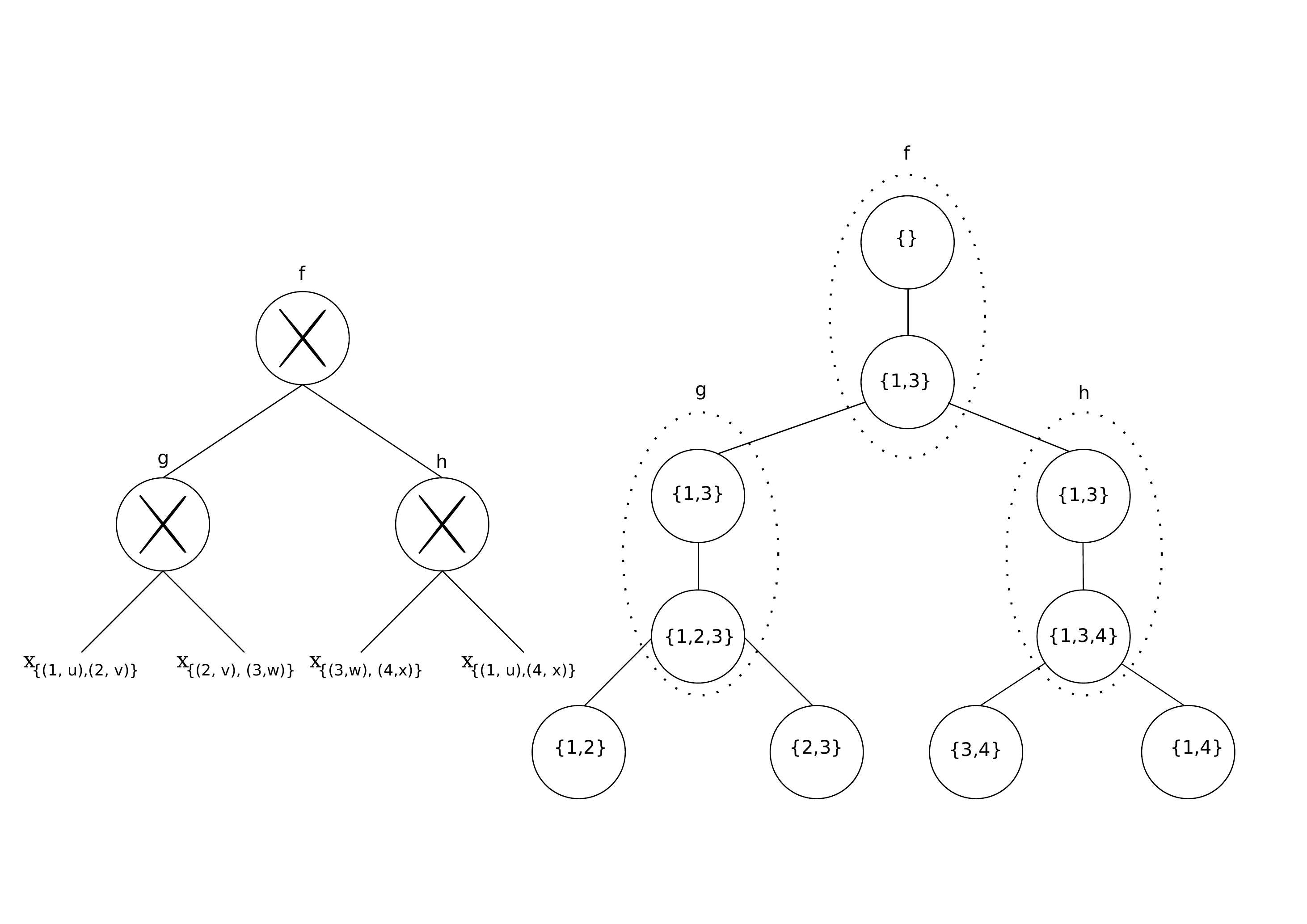}
  \caption{Tree decompositions from parse trees: The pattern is $C_4$ and $u, v, w, x\in [n]$}
  \label{fig:tw}
\end{figure}

\begin{theorem}[Theorem \ref{thm:mainabp} restated] \label{thm:abpproof}
  The monotone ABP complexity of $\homp{H}$ is $\Theta\bigl(n^{\pw(H)+1}\bigr)$.
\end{theorem}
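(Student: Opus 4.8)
The upper bound $O(n^{\pw(H)+1})$ is already known via the path-decomposition-based monotone skew-circuit/ABP construction of \cite{DiazST02, BKS18, MahajanS18}, so only the matching lower bound needs work. As in the proof of Theorem~\ref{thm:cktproof}, I work with $\colisop{H}$ instead of $\homp{H}$ (justified by Lemma~\ref{lem:homcolor}) and use the equivalence between monotone ABPs and monotone skew circuits. Fix a monotone skew circuit computing $\colisop{H,n}$, let $p = \pw(H)$, and recall that in a skew circuit every multiplication gate has at least one input that is an input gate (a single variable $x_e$). The plan is to take an arbitrary monomial $m$ of $\colisop{H,n}$ on vertices $(i,u_i)$, $1\le i\le k$, and its parse tree $T$, and extract from $T$ not a general tree decomposition but a \emph{path} decomposition of $H$. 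The skew structure is exactly what makes this possible: along any parse tree the non-trivial multiplication gates lie on a single root-to-leaf path (each such gate contributes one new edge-variable as its ``skew'' child, and the rest of the computation continues along the other child), so the bags I build by the same bottom-up rule --- leaf bag $\{i,j\}$ for input $x_{\{(i,u),(j,v)\}}$, union at a multiplication gate, and ``forget'' a vertex once all its incident edges have been multiplied --- are laid out along a path rather than a tree. The two tree-decomposition axioms (every edge of $H$ in some bag; the bags containing any fixed vertex form a connected sub-path) are verified exactly as in Theorem~\ref{thm:cktproof}: all edges appear because $m$ contains every edge of $H$, and a vertex is forgotten only after all its incident edges are multiplied, so its bags form a contiguous interval.

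\textbf{The counting step.} Having built a path decomposition from each parse tree, I locate a gate $g$ associated with a bag of size at least $p+1$ (such a bag must exist, since otherwise $H$ would have a path decomposition of width $<p$, contradicting $\pw(H)=p$). Assume WLOG the bag is exactly $\{1,\dots,p+1\}$ with vertex labels $(i,u_i)$. I then claim $g$ can occur in the parse tree of a monomial $m'$ only if $m'$ agrees with $m$ on all of these $p+1$ vertices, i.e.\ $m'$ contains $(i,u_i)$ for $1\le i\le p+1$. This is proved by exactly the two-case cut-and-paste exchange argument from Theorem~\ref{thm:cktproof}: if some vertex $i\le p+1$ is forgotten at a bag of $g$, then both subtrees below $g$ carry $(i,u_i)$, and one can splice subtrees of $T$ and $T'$ at $g$ to manufacture a parse tree of a monomial containing both $(i,u_i)$ and $(i,v_i)$ with $v_i\ne u_i$ --- impossible in $\colisop{H}$; if $i$ is not forgotten at $g$, then $(i,u_i)$ appears both inside $T_g$ and outside it, and the symmetric splice again forces a contradictory monomial. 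Hence each gate lies on the parse tree of at most $n^{k-p-1}$ of the $n^k$ monomials, and the circuit needs at least $n^{p+1}$ multiplication gates, giving the lower bound $\Omega(n^{\pw(H)+1})$.

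\textbf{Main obstacle.} The step I expect to require the most care is arguing that the decomposition produced from a parse tree in a \emph{skew} circuit is genuinely a path rather than an arbitrary tree. The bottom-up construction as written merges the root bags of the left and right subtrees of a multiplication gate into a common parent, which in a general circuit branches; I need to use that in a skew circuit one of those two subtrees is a single input gate (a leaf bag $\{i,j\}$), so the merge only ever attaches a pendant leaf to an existing path, keeping the structure linear. Making this precise --- and in particular checking that the ``forget'' bags inserted in step~3 also sit on the path, and that trivial multiplication gates (multiplying by the constant $1$) are harmless --- is the technical heart of the adaptation. Once that is established, the edge-covering property, the contiguity property, and the exchange-argument counting all go through verbatim from the proof of Theorem~\ref{thm:cktproof}, with $\tw$ replaced by $\pw$.
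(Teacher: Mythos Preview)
Your proposal is correct and follows essentially the same approach as the paper. The only presentational difference is how the path is extracted: the paper fixes a \emph{deepest} input gate of the parse tree as the starting leaf bag and walks up the spine, at each skew multiplication gate simply adding the two endpoints of the newly multiplied edge-variable to the current root bag (so a path is produced directly, with no branching ever created); you instead reuse the general bottom-up rule from Theorem~\ref{thm:cktproof}, which in a skew circuit yields a caterpillar whose pendant leaf bags $\{i,j\}$ are subsets of their parents and can be dropped to obtain the same path decomposition. Your ``main obstacle'' therefore dissolves with that one observation, and the exchange/counting argument is identical in both write-ups.
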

\begin{proof}
  The upper bound is already known (See \cite{DiazST02}, \cite{BKS18}).

  For the lower bound, we modify the proof of Theorem~\ref{thm:cktproof} to obtain a path decomposition instead. We consider the parse tree and start building the path decomposition from a deepest input gate. An example of this construction is shown in Figure~\ref{fig:pw} where the $4$-vertex cycle is the pattern.

  \begin{enumerate}
  \item For the chosen deepest input gate with label $x_{\{(i, u), (j, v)\}}$, add $\{i, j\}$ as a leaf bag. For example, in Figure~\ref{fig:pw}, we choose the input gate labeled $x_{\{(1,u), (2,v)\}}$ as the deepest input gate. A bag containing $\{1,2\}$ is added corresponding to this gate.

  \item For a gate that multiplies $x_{\{(i, u), (j, v)\}}$. Add $i$ and $j$ to the root bag of the other sub-tree and make it the new root. For example, in Figure~\ref{fig:pw}, the gate $g$ multiplies $x_{\{(3,w), (4,x)\}}$ and we add $4$ to the root bag of the gate $h$ to obtain a bag containing $\{1,3,4\}$.

  \item If there are vertices $i$ in $H$ such that all edges incident of $i$ have been multiplied into the monomial, add a new parent bag that forgets all such vertices. This is the new root bag corresponding to the gate. For example, in Figure~\ref{fig:pw}, we observe that at gate $g$, all edges incident on $3$ have been multiplied into the monomial and therefore we add a new parent bag that removes $3$ and make this new bag the root bag of $g$.
  \end{enumerate}

  It is easy to see that the resulting tree is a path. The same proof as for Theorem~\ref{thm:cktproof} shows that this is a path decomposition and that we can find for any parse tree for any monomial $m$, a gate $g$ in the parse tree for $m$ such that at most $n^{k-p-1}$ monomials contain $g$ in its parse tree where $p$ is the pathwidth of the pattern. The lower bound follows.
\end{proof}

\begin{figure}[ht]
  \centering
  \includegraphics[width=\textwidth]{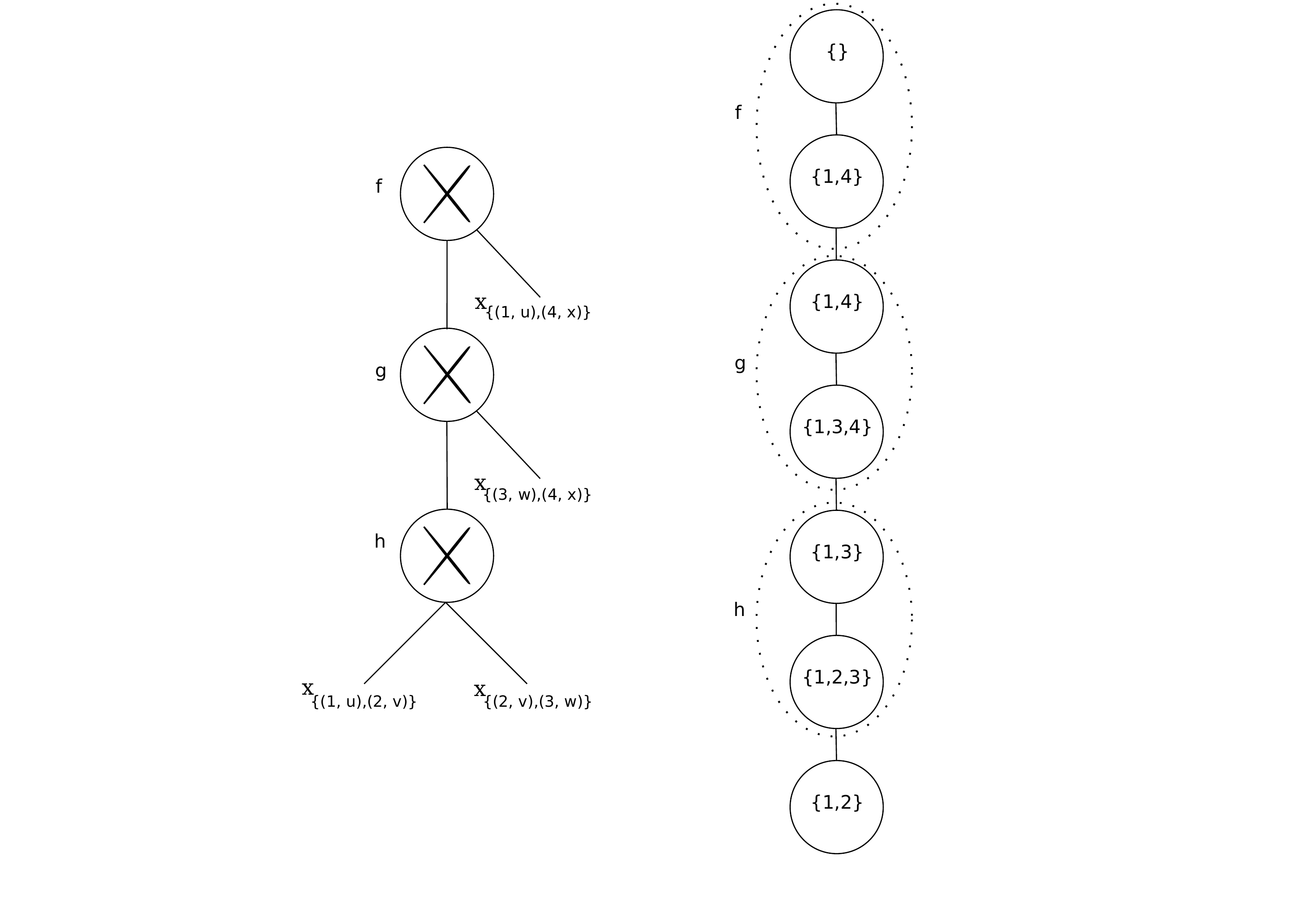}
  \caption{Path decompositions from parse trees: The pattern is $C_4$ and $u, v, w, x\in [n]$}
  \label{fig:pw}
\end{figure}

\begin{theorem}[Theorem \ref{thm:mainformulas} restated] \label{thm:formulaproof}
  The monotone formula complexity of $\homp{H}$ is $\Theta\bigl(n^{\td(H)}\bigr)$.
\end{theorem}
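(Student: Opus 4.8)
The plan is to prove the two matching bounds separately, keeping the argument parallel to the proofs of Theorems~\ref{thm:cktproof} and~\ref{thm:abpproof}.

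For the upper bound, by Lemma~\ref{lem:homcolor} it suffices to exhibit a monotone formula of size $O(n^{\td(H)})$ for $\colisop{H,n}$. I would fix an elimination tree $T$ of $H$ of minimum depth $\td(H)$, and for a vertex $v$ of $T$ let $A(v)$ be the set of ancestors of $v$ in $T$ that have a neighbour in the subtree $T_v$ rooted at $v$; the observation recorded after Definition~\ref{def:elimination} (every edge of $H$ joins an ancestor--descendant pair in $T$) gives $|A(v)| \le \td(H)-1$. For a colouring $\alpha \colon A(v) \to [n]$ define $F_v[\alpha]$ to be the sum, over all colourings of $V(T_v)$ that extend $\alpha$, of the product of the variables $x_{\{(i,\cdot),(j,\cdot)\}}$ over those edges $\{i,j\}$ of $H$ whose endpoint farther from the root of $T$ lies in $T_v$. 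This is well defined and obeys the recursion $F_v[\alpha] = \sum_{u \in [n]} \big(\prod x_{\{(i,\alpha(i)),(v,u)\}}\big)\prod_w F_w[\beta_w]$, where the first product ranges over edges $\{i,v\}$ of $H$ with $i \in A(v)$, the second over children $w$ of $v$ in $T$, and $\beta_w$ is the restriction of $\alpha \cup \{v \mapsto u\}$ to $A(w)$; also $F_{\mathrm{root}}[\emptyset] = \colisop{H,n}$. Unrolling the recursion without reusing gates (as a formula requires) and bounding the size by induction on the height of $T$, a vertex $v$ at depth $\ell$ in $T$ spawns at most $n^{|A(v)|} \le n^{\ell-1}$ copies of a local gadget of size $O(n)$, so the total size is $O(|V(H)|\, n^{\td(H)})$; moreover the formula has depth $O(\td(H))$, which for fixed $H$ is constant and is what the polylog-space algorithms need.

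For the lower bound, again by Lemma~\ref{lem:homcolor} it is enough to show that $\colisop{H,n}$ requires monotone formulas of size $\Omega(n^{\td(H)})$. I would fix, for each of the $n^{k}$ monomials of $\colisop{H,n}$ (one per colouring $u_1,\dots,u_k$ of $V(H)$), a parse tree of the formula; in a formula this is a subtree in which every gate occurs at most once. From the parse tree $T$ of a monomial $m$, I would build an elimination tree $E(T)$ of $H$ exactly in the spirit of the tree- and path-decomposition constructions in Theorems~\ref{thm:cktproof} and~\ref{thm:abpproof}: map each vertex $i$ of $H$ to its completion gate $\mathrm{top}_m(i)$, the least common ancestor in $T$ of the input gates for the edges of $H$ incident to $i$, and let $E(T)$ be the rooted tree on $V(H)$ whose ancestor relation is inherited from that of the gates $\mathrm{top}_m(\cdot)$ in $T$, ties (vertices sharing a completion gate) broken by vertex label. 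Since for adjacent $i,j$ the gates $\mathrm{top}_m(i)$ and $\mathrm{top}_m(j)$ are comparable in $T$, the tree $E(T)$ is a genuine elimination tree of $H$ (Definition~\ref{def:elimination}) and so has depth at least $\td(H)$. Using the monotone ``no invalid submonomial'' weakness together with the subtree-exchange argument of Theorems~\ref{thm:cktproof}/\ref{thm:abpproof}, I would then show that along a deepest root-to-leaf path $w_1,\dots,w_d$ ($d \ge \td(H)$) of $E(T)$ there is a single gate of the formula that can occur only in parse trees of monomials agreeing with $m$ on all of the coordinates $u_{w_1},\dots,u_{w_d}$; such a gate participates in at most $n^{k-\td(H)}$ monomials, and charging every monomial to one yields $\Omega(n^{k}/n^{k-\td(H)}) = \Omega(n^{\td(H)})$ gates.

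The step I expect to be the real obstacle is the italicised claim in the previous paragraph: passing from ``$E(T)$ has depth $\ge \td(H)$'' to ``one gate is pinned to $\ge \td(H)$ coordinates.'' This is exactly the point at which treedepth, rather than merely treewidth, has to be forced out: for circuits the analogous step only produces a bag of size $\tw(H)+1$, and a careless application of the exchange argument to formulas seems to pin only a constant number of coordinates (for instance, an input gate determines just two coordinates). The reason a formula does better is that its parse tree is an honest subtree with no repeated gates, so the chain $\mathrm{top}_m(w_1) \succ \cdots \succ \mathrm{top}_m(w_d)$ of completion gates sits in the formula as an ancestor chain and the exchange argument can be run for all $w_i$ simultaneously, against the bottom gate $\mathrm{top}_m(w_d)$; making this rigorous calls for care with low-degree vertices of $H$ (whose completion gate may be an input gate), with the tie-breaking rule, and with the connectivity of $H$, which is needed to guarantee that each $w_i$ really does appear both inside and outside the subtree below that gate. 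As a fallback I would instead prove the lower bound by induction on the optimal elimination tree of $H$, splitting $\colisop{H,n}$ at its root $r$: when $r$ is a cut vertex the colour of $r$ tags the split into factors on disjoint variable sets, which supplies the multiplicative factor of $n$ needed to promote the bound from the heaviest component of $H-r$ to $H$ itself, and the case where $H - r$ is connected is handled by a separate restriction argument.
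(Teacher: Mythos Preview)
Your proposal is correct and follows essentially the same route as the paper: the upper bound is the same elimination-tree recursion (the paper writes it for the transitive closure $H'$ of $T$ and then sets superfluous variables to $1$, whereas you restrict to the genuine ancestor set $A(v)$, but the size analysis is identical), and the lower bound proceeds by the same construction---your completion gate $\mathrm{top}_m(i)$ is precisely the paper's ``lowest gate at which all edges incident to $i$ have been multiplied,'' the induced ancestor order on $V(H)$ is exactly the elimination tree the paper builds, and the charged gate is the completion gate of the deepest leaf.

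One clarification on the step you flagged as the obstacle: the exchange is not performed at the bottom gate $g=\mathrm{top}_m(w_d)$ itself but at each $g_i=\mathrm{top}_m(w_i)$ separately; the formula property you identified (unique gate-to-root path) guarantees that every parse tree containing $g$ also contains each $g_i$, and the relevant fact is that $(w_i,u_{w_i})$ appears in \emph{both children} of $g_i$ (since $g_i$ is the \emph{lowest} gate covering all edges of $w_i$), not ``inside and outside'' the subtree at $g$. With that adjustment your exchange argument goes through exactly as in the paper, and neither the tie-breaking nor the fallback induction is needed.
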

\begin{proof}
  We first prove the upper bound\footnote{We believe this construction is already known as folklore. But we present it here since we use it in our algorithms.}. Let $T$ be an elimination tree of depth $d$ for $H$. We show how to construct a formula of size $n^d$ for $\colisop{H'}$ in a bottom-up fashion where $H'$ is the graph obtained from $T$ by adding all possible edges $\{i, j\}$ where $i$ is an ancestor of $j$ in $T$. The polynomial $\colisop{H}$ can be obtained by setting extra variables in this polynomial to $1$.

  Let $i$ be the label of a node in $T$ such that the path from root to $i$ is labeled $i_1,\dotsc,i_p$ and the children of $i$ are labeled $\ell_1,\dotsc,\ell_s$. We use the notation $(i_j, u_j)_j$ where $j\in [p]$ to denote the $p$ pairs $(i_1, u_1),\dotsc,(i_p, u_p)$. We construct the formula:

  \begin{equation*}
    f^{\{{(i_j, u_j)}_j\}}_i = \sum_u \biggl(\bigl(\prod_j x_{\{(i_j, u_j), (i, u)\}}\bigr) \prod_t f^{\{(i, u), (i_j, u_j)_j\}}_{\ell_t}\biggr)
  \end{equation*}

  where $j\in [p]$ and $t\in [s]$.

  We use induction on the height of the node $i$ in the elimination tree to prove the size upper bound. We claim that the formula that corresponds to such a node has size $O(n^i)$ (constants hidden by the $O$ notation depend only on $H$). For the base case, if $i$ is a leaf node, this formula has size $O(n)$. If $i$ has depth $c$ in $T$, then this formula has size $O(n)\times O(n^{c-1}) = O(n^c)$ by the induction hypothesis.

  If $\{i, j\}$ is not an edge in $H$, then we set all $x_{\{(i, u), (j, v)\}}$ to $1$ in the formula. The formula corresponding to the root node in $T$ is the required polynomial. To prove this, consider the colored subgraph isomorphism containing the vertices $(i, u_i)$ for $1\leq i\leq k$. This monomial has the parse tree obtained by setting the variable $u$ in the outermost summation for each $f^{*}_i$ to $u_i$. For the other direction, the parse tree obtained by setting $u$ to $u_i$ in the outermost summation of each $f^*_i$ generates the monomial that corresponds to the colored subgraph isomorphism on vertices $(i, u_i)$.
  
  We now prove the lower bound. Let $d$ be the treedepth of $H$. We consider a parse tree for a monomial $m$ in a formula computing $\colisop{H}$ and build an elimination tree for $H$ from it. An example of this construction is shown in Figure~\ref{fig:td} where the $4$-vertex path is the pattern. We associate a set of rooted trees with each gate $g$ as follows: If the gate $g$ is the lowest gate in the parse tree such that all edges incident on vertices $i_1,\dotsc,i_r$ are present in the monomial computed at $g$, then make $i_1$ the parent of all roots of trees from the children of $g$. Now, make $i_{j+1}$ the parent of $i_j$ for $1\leq j \leq r-1$. We call the vertices $i_1,\dotsc,i_r$ to be associated with $g$. If there are no such vertices for a gate $g$, then the forest associated with $g$ is simply the union of the forests of its children. We start with empty forests initially. For any edge $\{i, j\}\in E(H)$, the gates that corresponds to $i$ and $j$ in this tree must belong to the path from the input gate for this edge to the root. This shows that this tree is an elimination tree for $H$.

  In Figure~\ref{fig:td}, the input gate labeled $z$ is such that the only edge incident on $(1,u)$ is already multiplied in at $z$. Therefore, we associate the vertex $1$ with $z$ and the set of trees associated with $z$ is just the one-vertex tree $1$. Also, at gate $f$, we have multiplied in all edges incident on $(3,w)$ and therefore, we associate $3$ with $f$. Also, note that this vertex $3$ is the parent of the roots of elimination trees from $g$ and $h$.

 For proving the lower bound, we consider some gate $g$ in the parse tree of $m$ such that $g$ is associated with a leaf at a depth of at least $d$ in the elimination tree. We can assume that the depth is exactly $d$. If it is more, then we obtain a better lower bound. Let this vertex be $d$. Assume without loss of generality that $1,\dotsc, d$ are the vertices on the path from the root to $d$ in the elimination tree. Let $(1, u_1), \dotsc, (d, u_d)$ be the corresponding vertices in $m$. We claim that any monomial $m'$ for which $g$ appears in its parse tree must also have vertices $u_j$ of color $j$ for $1\leq j\leq d$.

  Suppose for contradiction that there is a monomial $m'$ with $g$ in its parse tree and $m'$ has vertex $(i, v_i)$ where $u_i\neq v_i$ for some $1\leq i\leq d$. Let $g'$ be the gate in $T$ such that $i$ is associated with $g'$. Notice that $g'$ must be an ancestor of $g$ ($g'$ could be the same as $g$). Let $T'$ be the parse tree for $m'$. Then, the tree $T'$ must contain $g'$ as well because it contains $g$ and in a formula there is a unique path from any gate to the root. There are two cases:

  \begin{enumerate}
  \item The gate $g'$ is an input gate: In this case, $u_i = v_i$ because a monomial cannot have two different vertices of the same color and $(i, u_i)$ is present in $g'$.

  \item The gate $g'$ is a multiplication gate: In this case, the vertex $(i, u_i)$ must appear in both subtrees of $T_{g'}$. If the vertex $(i, v_i)$ appears in the right (left) subtree of $T'_{g'}$, then we replace the left (right) subtree of $T'_{g'}$ with the left (right) subtree of $T_{g'}$. Otherwise, the vertex $(i, v_i)$ appears outside $T'_{g'}$ in $T'$. In this case, we replace the subtree $T'_{g'}$ with the subtree $T_{g'}$.
  \end{enumerate}

  In all cases, we obtain a monomial that contains both vertices $(i, u_i)$ and $(i, v_i)$. A contradiction. Therefore, at most $n^{k-d}$ monomials of the polynomial can contain the gate $g$ in their parse tree. Since there are a total of $n^k$ monomials, the lower bound follows.

  In Figure~\ref{fig:td}, we can infer using the above argument that the input gate $z$ can be a part of parse trees of at most $n$ different monomials.
\end{proof}

\begin{figure}[ht]
  \centering
  \includegraphics[width=\textwidth]{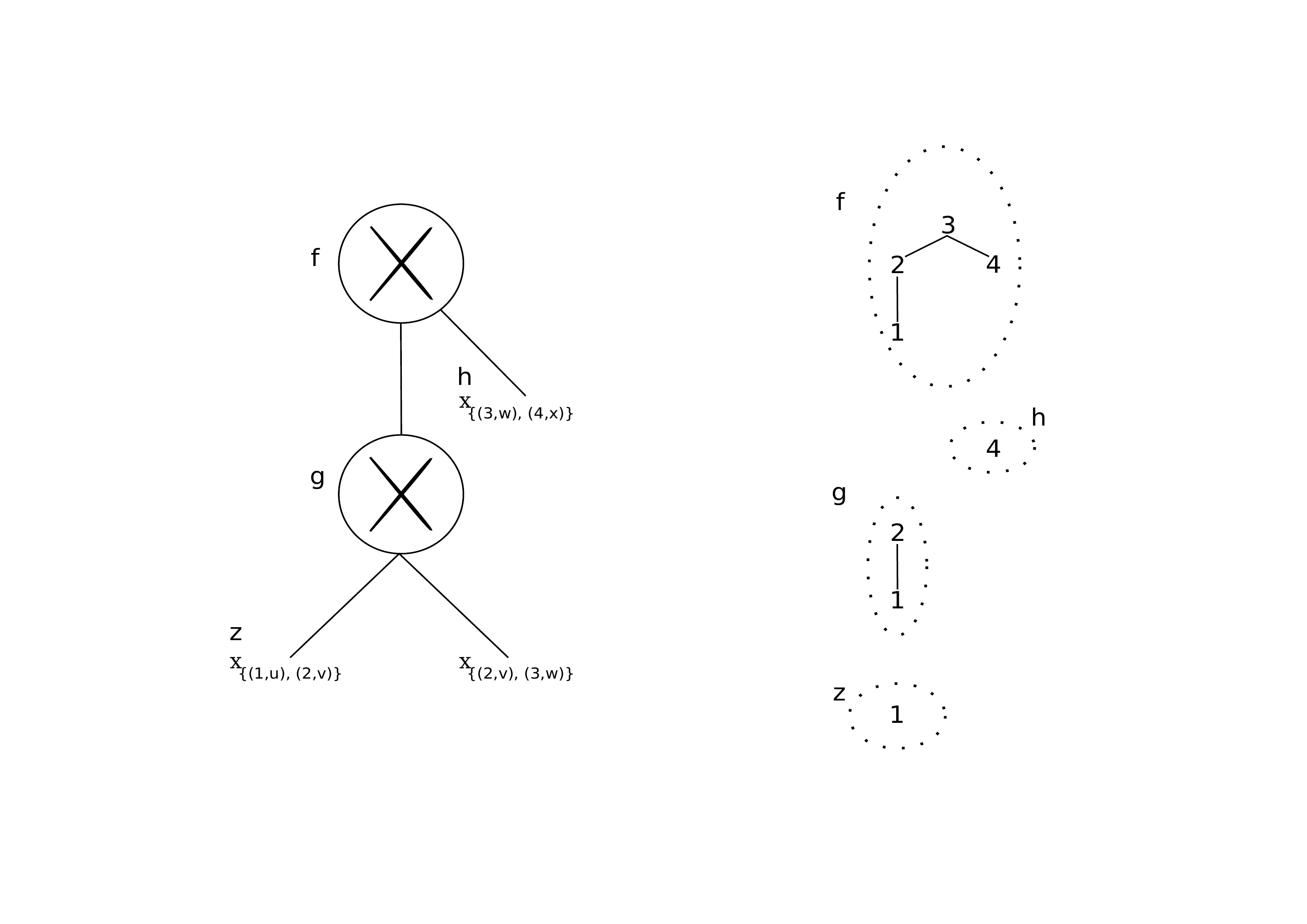}
  \caption{Elimination trees from parse trees: The pattern is $P_4$ and $u, v, w, x\in [n]$}
  \label{fig:td}
\end{figure}

We show how to use our characterizations to prove superpolynomial separations between various monotone models. The polynomials are \emph{not} based on fixed pattern graphs but a natural extension of our polynomials to patterns that grow in size with the host graph.

\begin{theorem}{\normalfont (Compare \cite[Theorem 1, Proposition 13]{HrubesY16})} \label{thm:cktvsabp}
  For any $p\geq 2$ and $n = 2^{p+1} - 1$, there is a polynomial family of degree $n-1$ on $n^3 - n$ variables such that the family has linear size monotone circuits but require $n^{\log_2(n+1)}$ size monotone ABPs.
\end{theorem}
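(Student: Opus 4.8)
The plan is to read the separation off Theorems~\ref{thm:mainckt} and~\ref{thm:mainabp} by choosing a pattern whose treewidth is far below its pathwidth. By Fact~\ref{fact:tw}, for every $p\geq 2$ there is a tree $X_n$ on $n=2^{p+1}-1$ vertices with $\pw(X_n)=p$; being a tree, $\tw(X_n)=1$. I would take the hard family to be the colored isomorphism polynomial $\colisop{X_n,n}$, working with the colored version directly rather than passing through Lemma~\ref{lem:homcolor} (whose blow-up is exponential in $|E(X_n)|$, which is not a constant here). This polynomial has degree $|E(X_n)|=n-1$ and $\Theta(n^3)$ variables.

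For the upper bound I would invoke the tree-decomposition-based monotone circuit construction of~\cite{DiazST02,BKS18} that underlies the upper-bound half of Theorem~\ref{thm:cktproof}. For a tree pattern this specializes to the standard bottom-up dynamic program over $X_n$ itself: for each vertex $v$ of $X_n$ and each host vertex $u\in[n]$ one gate computes the part of $\colisop{X_n,n}$ contributed by the subtree rooted at $v$ with $v$ pinned to $u$, obtained from the children's gates by a sum over $[n]$ followed by a product over the children. This gives a monotone circuit of size $O\bigl(|V(X_n)|\cdot n^{\tw(X_n)+1}\bigr)=O(n\cdot n^{2})=O(n^{3})$, i.e.\ linear in the number of variables.

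For the lower bound I would apply the argument in the proof of Theorem~\ref{thm:abpproof} with $H=X_n$: it is stated for an arbitrary pattern and shows that in any monotone ABP for $\colisop{H,n}$ the parse tree of each of the $n^{|V(H)|}$ monomials contains a gate sitting on a bag of size $\geq\pw(H)+1$ of the path decomposition built from that parse tree, and that each such gate occurs in the parse trees of at most $n^{|V(H)|-\pw(H)-1}$ monomials; hence the ABP has at least $n^{|V(H)|}/n^{|V(H)|-\pw(H)-1}=n^{\pw(H)+1}$ gates. Since $\pw(X_n)=p$ and $n+1=2^{p+1}$, so that $p+1=\log_2(n+1)$, any monotone ABP computing $\colisop{X_n,n}$ has size at least $n^{\log_2(n+1)}$, which is superpolynomial, whereas the circuit above has size $O(n^{3})$. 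This yields the claimed separation.

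The one delicate point — and the main thing the write-up must verify — is that Theorems~\ref{thm:mainckt} and~\ref{thm:mainabp} are phrased for a \emph{fixed} pattern while here $X_n$ grows with $n$, so one must check the hidden constants do no harm. For the circuit upper bound the size is $O(|V(H)|)\cdot n^{\tw(H)+1}$, which stays $O(n^{3})$ (hence linear in the number of variables) when $|V(H)|=n$; for the ABP lower bound the counting step produces exactly $n^{\pw(H)+1}$, with no factor depending on $|V(H)|$, so it transfers unchanged. Related to this is the choice of $\colisop{}$ over $\homp{}$: the tree-decomposition construction naturally outputs (a restriction of) the colored polynomial, and $\colisop{X_n,n}$ genuinely has $\Theta(n^{3})$ variables so that the $O(n^{3})$-size circuit is linear in its input, whereas $\homp{X_n,n}$ has only $\binom{n}{2}$ variables and the same circuit would fail to be linear.
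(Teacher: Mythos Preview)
Your proposal is correct and follows exactly the same approach as the paper: take the family $\colisop{X_n,n}$, use the tree-decomposition construction of \cite{DiazST02,BKS18} together with $\tw(X_n)=1$ for the linear-size monotone circuit, and invoke the pathwidth lower bound from Theorem~\ref{thm:abpproof} with $\pw(X_n)=p$ and $p+1=\log_2(n+1)$ for the ABP lower bound. Your discussion of why one works directly with $\colisop{}$ rather than via Lemma~\ref{lem:homcolor}, and your check that the hidden constants (namely the $|V(H)|$ factor in the circuit upper bound and the absence of any pattern-dependent factor in the ABP lower bound) behave correctly when the pattern grows with $n$, are both on point and in fact make explicit what the paper leaves implicit.
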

\begin{proof}
  The $n^{th}$ polynomial in the family is simply $\colisop{X_{n}}$. The upper bound follows from the fact that $X_n$ (See Fact~\ref{fact:tw}) is a tree and the construction in \cite{DiazST02} (See also \cite{BKS18}). The lower bound follows from the fact that $\pw(X_n) = \log_2(n+1)$. This is slightly better than the separation shown by Hrubes and Yehudayoff \cite{HrubesY16} because for the polynomial family they use to obtain this separation, they only show super-linear circuit upper bounds and the constant factor in the exponent in the lower bound is worse.
\end{proof}

\begin{theorem}{\normalfont(See \cite[Theorem 3.2]{Snir80})} \label{thm:abpvsformula}
  For $n\geq 2$, there is a polynomial family of degree $n-1$ on $n^3 - n$ variables such that the family has linear size monotone ABPs but require $n^{\lceil \log_2(n+1) \rceil}$ size monotone formulas.
\end{theorem}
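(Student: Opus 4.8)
The plan is to instantiate the characterizations of Section~\ref{sec:ckt} at a single explicit pattern, namely the path. Concretely, I would take the $n$th polynomial in the family to be $\colisop{P_n,n}$, the colored isomorphism polynomial of the path $P_n$ on $n$ vertices, with $n$ vertices of each color in the host (Definition~\ref{def:colisop}). Since $P_n$ has exactly $n-1$ edges, each monomial is a product of $n-1$ variables, so the polynomial has degree $n-1$, and its number of variables is $|E(P_n)|\cdot n^2=\Theta(n^3)$. From the basic facts on paths we have $\pw(P_n)=1$ and $\td(P_n)=\lceil\log_2(n+1)\rceil$. The one structural caveat to keep in mind throughout is that the homomorphism-to-colored-isomorphism reduction of Lemma~\ref{lem:homcolor} costs a factor exponential in $|E(H)|$ and is therefore useless for a pattern that grows with $n$; so both the ABP upper bound and the formula lower bound have to be argued directly for $\colisop{P_n,n}$, never routed through $\homp{P_n}$.

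First I would prove the ABP upper bound by writing down the obvious monotone ABP for $\colisop{P_n,n}$: place $n$ consecutive layers of $n$ nodes each, where the $u$-th node of layer $i$ encodes the choice $u_i=u$, add an edge labelled $x_{\{(i,u),(i+1,u')\}}$ from the $u$-th node of layer $i$ to the $u'$-th node of layer $i+1$ for all $i,u,u'$, and attach a source into layer $1$ and a sink out of layer $n$. Summing over $s$--$t$ paths yields exactly $\sum_{u_1,\dots,u_n}\prod_{i=1}^{n-1}x_{\{(i,u_i),(i+1,u_{i+1})\}}$, all edge labels are variables or the constant $1$ (so the ABP is monotone), and the ABP has $(n-1)n^2+2n=O(n^3)$ edges, i.e.\ size linear in the number of variables. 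Equivalently, this is the width-$1$ path-decomposition construction of \cite{DiazST02,BKS18} underlying the upper bound in Theorem~\ref{thm:abpproof}, specialized to the path decomposition $\{1,2\},\{2,3\},\dots,\{n-1,n\}$ of $P_n$.

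Next I would obtain the formula lower bound by reusing, essentially verbatim, the lower-bound half of the proof of Theorem~\ref{thm:formulaproof}. That argument takes any monotone formula for $\colisop{H,m}$, extracts from the parse tree of a monomial an elimination tree of $H$, locates a gate associated with a leaf of depth at least $\td(H)$, and shows such a gate can occur in the parse trees of at most $m^{k-\td(H)}$ of the $m^{k}$ monomials, where $k=|V(H)|$; hence the formula has at least $m^{\td(H)}$ multiplication and input gates. Nothing in it assumes $H$ is of bounded size, so applying it with $H=P_n$ and $m=n$ gives a monotone formula size lower bound of $n^{\td(P_n)}=n^{\lceil\log_2(n+1)\rceil}$. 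Combining the two bounds establishes the theorem for every $n\ge 2$, and since $\lceil\log_2(n+1)\rceil\to\infty$ the gap is super-polynomial. I expect the only point needing real care to be the caveat from the first paragraph: checking that the relevant upper- and lower-bound machinery of Section~\ref{sec:ckt} survives the move to a \emph{growing} pattern; it does, because the ABP is explicit and the Theorem~\ref{thm:formulaproof} lower bound never uses that $H$ is fixed.
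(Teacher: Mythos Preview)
Your proposal is correct and follows essentially the same route as the paper: take the $n$th polynomial to be $\colisop{P_n,n}$, get the linear monotone ABP from the pathwidth-$1$ construction of \cite{DiazST02,BKS18}, and get the formula lower bound from $\td(P_n)=\lceil\log_2(n+1)\rceil$ via the argument of Theorem~\ref{thm:formulaproof}. The paper's own proof is a three-line pointer to exactly these ingredients; you have simply unpacked them and, usefully, made explicit the caveat that Lemma~\ref{lem:homcolor} does not transfer the bounds when the pattern grows with $n$, so one must argue directly on $\colisop{P_n}$---a point the paper leaves implicit.
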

\begin{proof}
  The $n^{th}$ polynomial in the family is simply $\colisop{P_{n}}$. The upper bound follows from the fact that $P_n$ has pathwidth $1$ and the construction in \cite{DiazST02} (See also \cite{BKS18}). The lower bound follows from the fact that $\td(P_n) = \lceil \log_2(n+1) \rceil$. We note that Snir \cite{Snir80} used the same polynomial by looking at $\IMM$ although he only considered one entry of $\IMM_{n,n\times n}$. The polynomial $\IMM_{n,n\times n}$ is the $(1,1)^{\text{th}}$ entry of the matrix obtained by multiplying $n$ matrices of order $n\times n$ where each entry of each matrix is a fresh variable. The polynomial $\colisop{P_n}$ is the sum of all entries.
\end{proof}

We now do an analysis of lower bound techniques that work by showing that a ``large'' number of monomials have to be distributed across a ``large'' number of addition gates. We also show that unlike Schnorr's method, we can always lower bound the number of addition gates in a monotone circuit using such an approach. The following proof is an adaptation of the corresponding proof for Boolean circuits \cite{J}.

\begin{theorem}\label{thm:proofuniv}
  Let $p$ be a polynomial such that any monotone circuit $C$ computing it must have $s$ addition gates. Then, we prove that given $C$, we can construct a bijection $b : X \mapsto G$, where $X$ is a set of $s$ monomials in $p$ and $G$ is the set of all addition gates in $C$ and each monomial is mapped to an addition gate in some parse tree for the monomial.
\end{theorem}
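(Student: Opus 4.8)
The plan is to recast the existence of $b$ as a bipartite matching problem and to certify the matching by Hall's theorem, the hypothesis on $p$ entering only through the fact that $|G| = s$ is the minimum number of addition gates over all monotone circuits for $p$. (Since the required $b$ is a bijection of a set $X$ of $s$ monomials onto $G$, necessarily $|G| = s$, so $C$ attains the minimum; we use nothing else about $C$.)

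\emph{Reduction to a matching.} I form the bipartite graph $B$ whose left vertices are the monomials of $p$, whose right vertices are the addition gates $G$ of $C$, and in which a monomial $m$ is adjacent to $g$ precisely when $g$ occurs in some parse tree of $m$ rooted at the output of $C$. A map $b$ as in the statement is exactly a matching of $B$ that saturates $G$: its inverse sends each $g \in G$ to a distinct monomial having $g$ in one of its parse trees, and we let $X$ be the set of saturated monomials, so $|X| = |G| = s$. Thus it suffices to prove Hall's condition on the $G$-side: $|N(G')| \ge |G'|$ for every $G' \subseteq G$, where $N(G')$ is the set of monomials of $p$ that have a parse tree meeting $G'$.

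\emph{The crux: verifying Hall's condition.} Suppose $|N(G')| < |G'|$ for some $G'$; I build a monotone circuit for $p$ with fewer than $|G|$ addition gates, a contradiction. Fix for each $g \in G'$ one of its two inputs, $g^\flat$. \emph{Bypassing:} let $C_0$ be obtained from $C$ by redirecting every wire leaving a gate $g \in G'$ to leave $g^\flat$ instead (resolving chains through $G'$, which terminate as $C$ is a DAG) and deleting unused gates; $C_0$ is monotone, has $|G| - |G'|$ addition gates, and computes $q := \sum_T \mathrm{mon}(T)$ over those parse trees $T$ of $C$ that choose $g^\flat$ at every gate of $G'$ they visit. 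Because every parse tree of $C_0$ is such a parse tree of $C$, each coefficient of $q$ is at most that of $p$, so $p - q$ is monotone; and a monomial where $q$ and $p$ disagree has a parse tree of $C$ that visits $G'$, so $\mathrm{supp}(p - q) \subseteq N(G')$, a set of size at most $|G'| - 1$. \emph{Patching:} compute $p - q$ by writing each of its monomials as an addition-free product and summing them — at most $|N(G')| - 1$ addition gates — and add one gate outputting $q + (p - q) = p$. The total is at most $(|G| - |G'|) + (|N(G')| - 1) + 1 \le |G| - |G'| + |N(G')| \le |G| - 1$ addition gates (and if $p - q = 0$, $C_0$ alone already uses $|G| - |G'| < |G|$), contradicting minimality of $s = |G|$.

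\emph{Conclusion.} Hall's theorem now yields a matching of $B$ saturating $G$ (found, if desired, by augmenting paths), and reading it off gives $b$; the case $G = \emptyset$, where $p$ is a single monomial, is trivial. I expect the only delicate point to be the bookkeeping in the bypass step — verifying that $C_0$ really computes the "$g^\flat$-consistent" sum of parse trees of $C$ and that this degrades coefficients only on monomials of $N(G')$ — while the remaining addition-gate counts and the Hall/matching machinery are routine.
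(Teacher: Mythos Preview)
Your proposal is correct and follows essentially the same approach as the paper: set up the bipartite graph between monomials and addition gates, apply Hall's theorem on the gate side, and derive a contradiction to optimality by excising the offending set $G'$ and patching back the at most $|G'|-1$ affected monomials. Your bypass construction (redirecting each $g\in G'$ to a fixed child $g^\flat$) is a more explicit and carefully justified version of what the paper simply calls ``remove these $k$ gates,'' but the gate-count arithmetic and the overall logic are identical.
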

\begin{proof}
  Suppose that $C$ is optimal, i.e., has exactly $s$ addition gates. Construct a bipartite graph where one side is the set of all addition gates in $C$ and the other side is the set of all monomials in $p$. Add edge $\{g, m\}$ if and only if $g$ appears in some parse tree for $m$. Now, a bijection $b$ can be constructed from a matching of size $s$ in this graph. By Hall's theorem, the only way this can fail to exist is if there are $k$ addition gates such that only $k-1$ or fewer monomials have parse trees that contain these gates (say $m_1, \dotsc, m_{k-1}$). If $k = 1$, then this means there is an unused addition gate in the circuit. This contradicts the assumption that $C$ is optimal. Otherwise, remove these $k$ gates from $C$. The polynomial computed by this new circuit is $p - \sum_{i=1}^{k-1} c_i m_i$ for some constants $c_i$. But $\sum_{i=1}^{k-1} c_i m_i$ can be computed using $k-2$ addition gates. We will need one more addition gate to add this back to get the original polynomial $p$. This new circuit has only $s - k + k - 2 + 1 = s - 1$ addition gates, a contradiction to the optimality of $C$.
\end{proof}

Therefore, one can always build a lower bound proof by constructing the set of monomials $X$ and an appropriate function $b$ from an arbitrary optimal circuit for the given polynomial. However, unlike Schnorr's method, this approach is not very concrete. The method used to come up with $X$ and $b$ may vary considerably depending on the polynomial.

\section{Algorithms}
\label{sec:algo}

In this section, we obtain algorithms for subgraph isomorphism problems that are simultaneously time and space-efficient. All algorithms follow from theorems in Bl\"aser, Komarath, and Sreenivasaiah \cite{BKS18} and the observation that all constructions in the previous section can be performed in logarithmic space. Evaluating the formulas for homomorphism polynomials require only $O(\log^2(n))$ space because they are logarithmic depth and all values are polynomially bounded. We note that fast matrix multiplication based algorithms yield better running times for all problems considered in this section. However, such algorithms use polynomial space. Algorithms that do not use fast matrix multiplication, called \emph{combinatorial} algorithms, are also widely studied (See \cite{VW09,FominLRSR12,CurticapeanDM17,BKS18}). We use the following well-known fact.

\begin{fact}
  All graphs other than $K_k$ on $k$ or fewer vertices has treedepth at most $k-1$.
\end{fact}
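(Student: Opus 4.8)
The plan is to reduce to connected graphs on exactly $k$ vertices and then run an induction. As a preliminary I would record the trivial bound $\td(G)\le |V(G)|$ for every graph $G$: its elimination forest has exactly $|V(G)|$ vertices, so every path from a leaf to the root has at most that many vertices; here I take the treedepth of a disconnected graph to be the maximum treedepth over its connected components (Definition~\ref{def:elimination} is phrased for connected $H$, and this is the natural extension, consistent with the convention already used for the polynomial complexities). Consequently, if $G$ has at most $k-1$ vertices, or has exactly $k$ vertices but is disconnected (so that each component has at most $k-1$ vertices), we already get $\td(G)\le k-1$. The only remaining case is that $G$ is connected with $|V(G)|=k$ and $G\ne K_k$.

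For this case I would prove, by induction on the number of vertices, that every connected non-complete graph on $k$ vertices has treedepth at most $k-1$. The base cases $k\le 2$ are vacuous, since the only connected graphs on at most two vertices are $K_1$ and $K_2$. For the inductive step ($k\ge 3$), the key claim is that there is a vertex $v$ of $G$ such that $G-v$ is either disconnected, or connected and not complete. Granting the claim, build an elimination tree of $G$ rooted at $v$ by hanging optimal elimination trees of the components of $G-v$ below $v$; its depth is $1+\max_C \td(G[C])$, where $C$ ranges over the components of $G-v$. If $G-v$ is disconnected, each $C$ has at most $k-2$ vertices, so $\td(G[C])\le k-2$ by the trivial bound; if $G-v$ is connected and not complete, the induction hypothesis gives $\td(G-v)\le k-2$. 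Either way $\td(G)\le k-1$.

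To establish the key claim I would argue by contradiction: if no such $v$ exists then $G-v$ is complete for every vertex $v$, i.e.\ $G-v=K_{k-1}$ for all $v$. Since $k\ge 3$, for any two vertices $a,b$ there is a third vertex $c$, and $a$ and $b$ both lie in the clique $G-c$, so $\{a,b\}\in E(G)$; as $a,b$ were arbitrary this means $G=K_k$, contradicting the hypothesis. This finishes the induction, and hence the Fact. The argument is entirely elementary, so I do not anticipate a genuine obstacle; the only point that needs a sentence of care is the bookkeeping around disconnected graphs noted above, since Definition~\ref{def:elimination} is literally stated only for connected graphs.
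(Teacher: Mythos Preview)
Your proof is correct. The paper does not actually prove this statement; it is cited as a ``well-known fact'' and used without justification, so there is no paper argument to compare against. Your induction via the key claim (some vertex $v$ makes $G-v$ disconnected or connected non-complete) is a clean and self-contained way to establish it, and your handling of the disconnected case---both for graphs with fewer than $k$ vertices and for the definitional point that Definition~\ref{def:elimination} is stated only for connected graphs---is appropriate.
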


\begin{theorem}\label{thm:algocount}
  For any $k$-vertex pattern graph $H$ other than $K_k$, there is a combinatorial algorithm that runs in time $O\bigl(n^{k-1}\bigr)$ and space $O(\log^2(n))$ that computes the number of subgraphs isomorphic to the pattern graph in a given $n$-vertex host graph.
\end{theorem}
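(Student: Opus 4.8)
The plan is to reduce the counting problem to evaluating the homomorphism polynomial and then invoke the formula upper bound from Theorem~\ref{thm:formulaproof} together with the standard transfer results of \cite{BKS18}. First I would recall the chain of reductions from \cite{BKS18}: counting subgraph isomorphisms of $H$ in an $n$-vertex host graph $G$ can be expressed as a linear combination, with rational coefficients depending only on $H$, of the quantities $\homp{H',n}$ evaluated at the adjacency-matrix assignment of $G$, where $H'$ ranges over the (finitely many, $O(1)$ since $k$ is fixed) homomorphic images / spasm of $H$; equivalently one works directly with $\colisop{H,n}$ after the colour-splitting argument of Lemma~\ref{lem:homcolor}. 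Since $H$ is not $K_k$, the Fact immediately preceding the theorem gives $\td(H')\le k-1$ for every relevant $H'$ (each $H'$ has at most $k$ vertices and is not $K_k$ when $H$ is not), so by the upper-bound half of Theorem~\ref{thm:formulaproof} each $\homp{H'}$ has a monotone formula of size $O(n^{k-1})$, and moreover — as emphasized in the proof of Theorem~\ref{thm:formulaproof} — that formula is \emph{constant-depth} (its depth depends only on $\td(H')\le k-1$, a constant).

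Next I would argue the space and time bounds for evaluating such a formula on a given host graph. The formula from the elimination-tree construction has the nested shape $f^{\{(i_j,u_j)_j\}}_i = \sum_u \bigl(\prod_j x_{\{(i_j,u_j),(i,u)\}}\bigr)\prod_t f^{\{(i,u),(i_j,u_j)_j\}}_{\ell_t}$; evaluating it amounts to a constant-depth nest of summations over $[n]$, so a straightforward recursive evaluation runs in time $O(n^{k-1})$ (one factor of $n$ per summation level, of which there are $\td(H')\le k-1$) and uses only $O(\log n)$ bits of working memory per recursion level for the current summation index plus $O(\log n)$ bits for the running partial sums, which stay polynomially bounded because all inputs are $0/1$; with $O(1)$ recursion depth this is $O(\log n)$ space, and even the crude bound of treating it as a logarithmic-depth formula with polynomially-bounded intermediate values gives $O(\log^2 n)$ space. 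All the graph-theoretic preprocessing — enumerating the homomorphic images $H'$, computing their elimination trees, computing the coefficients of the linear combination — depends only on $H$ and is therefore $O(1)$ and certainly log-space computable, which is the extra observation (noted at the start of Section~\ref{sec:algo}) that makes the whole pipeline log-space uniform. Finally, since there are only $O(1)$ terms $H'$ in the linear combination, evaluating all of them and combining costs $O(n^{k-1})$ time and $O(\log^2 n)$ space in total; the algorithm is combinatorial since it uses no fast matrix multiplication.

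The main obstacle I expect is not the formula evaluation itself but making the reduction ``subgraph isomorphism counting $\to$ homomorphism polynomial'' precise and checking that it preserves the combinatorial, space-efficient character: one must verify that the coefficients in the spasm expansion are fixed rationals (so that forming the linear combination is trivial arithmetic on $O(\log n)$-bit numbers), that every homomorphic image $H'$ arising is genuinely not $K_k$ so the $\td\le k-1$ bound applies uniformly, and that the substitution from $\colisop{H'}$ down to the actual $0/1$ host-graph evaluation (setting non-edge variables to $0$ and un-colouring as in Lemma~\ref{lem:homcolor}) can be folded into the formula without breaking constant depth or log-space uniformity. All of these are essentially bookkeeping given \cite{BKS18} and the constructions of the previous section, so the theorem follows.
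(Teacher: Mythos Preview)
Your proposal is correct and follows essentially the same route as the paper: express the subgraph count as a fixed linear combination of homomorphism counts over the spasm of $H$ (deferring to \cite{BKS18}), observe that every graph appearing has at most $k$ vertices and is not $K_k$ so its treedepth is at most $k-1$, and then evaluate the constant-depth elimination-tree formulas of Theorem~\ref{thm:formulaproof} in $O(n^{k-1})$ time and $O(\log^2 n)$ space. The paper's own proof is a two-line sketch pointing to the same ingredients; your write-up simply unpacks the space/time accounting and the uniformity check more explicitly.
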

\begin{proof}
  The key idea is that the number of subgraphs isomorphic to $H$ can be expressed as a linear combination of the number of homomorphisms of $H$ and patterns on fewer than $k$ vertices. More details can be found in Theorem~7.3 of \cite{BKS18}.
\end{proof}

\begin{theorem}\label{thm:algodetect}
  For any $k$-vertex pattern graph $H$ other than $K_k$, there is a combinatorial algorithm that runs in time $O\bigl(n^{k-1}\bigr)$ and space $O(\log^2(n))$ that decides whether the pattern graph appears as an induced subgraph in a given $n$-vertex host graph.
\end{theorem}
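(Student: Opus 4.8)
The plan is to follow the same two-step recipe as the proof of Theorem~\ref{thm:algocount}: first reduce induced-subgraph detection, for a fixed $k$-vertex pattern $H \neq K_k$, to the evaluation of homomorphism polynomials on patterns of small treedepth (this is where one invokes~\cite{BKS18}), and then evaluate those polynomials space-efficiently using the formula upper bound of Theorem~\ref{thm:mainformulas}. Concretely, I would invoke (or re-derive) the combinatorial identity of~\cite{BKS18} expressing ``$G$ contains an induced copy of $H$'' through a bounded number -- bounded by a function of $k$ alone -- of integer quantities, each obtained from values $\homp{F,n}$ evaluated at the adjacency matrix of the host graph by the standard quotient and inclusion--exclusion manipulations (cf.~\cite{CurticapeanDM17}), where every pattern $F$ appearing has at most $k$ vertices and, when it has exactly $k$ vertices, is never the complete graph $K_k$. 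That last clause is the crux: it is exactly why \emph{detecting} induced copies is cheaper than \emph{counting} them, since Theorem~\ref{thm:algotransfer} shows counting already forces a $K_k$-subproblem, i.e.\ counting $k$-cliques. Note also that all the integers involved are counts of substructures of an $n$-vertex graph, hence at most $n^k$.

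Granting this reduction, the remainder is routine. By the Fact that every graph other than $K_k$ on at most $k$ vertices has treedepth at most $k-1$, each pattern $F$ above satisfies $\td(F) \le k-1$, so the upper-bound half of Theorem~\ref{thm:mainformulas} -- the elimination-tree construction -- gives a monotone formula of size $O(n^{\td(F)}) = O(n^{k-1})$ for $\homp{F,n}$ (equivalently $\colisop{F,n}$). Inspecting that construction, it is an $O(k)$-deep nesting of $n$-ary sums following the structure of an elimination tree of $F$; replacing each $n$-ary $+$-gate by a balanced binary tree turns it into a formula of depth $O(\log n)$ in which every intermediate value is a non-negative integer of magnitude at most $n^k$, and whose gate structure is determined by the elimination tree of $F$ together with the host adjacency matrix and is hence computable in $O(\log n)$ space. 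Evaluating such a uniform, logarithmic-depth formula by a depth-first traversal keeps one $O(\log n)$-bit accumulator per nesting level, so it runs in $O(\log^2 n)$ space and in time proportional to the formula size, $O(n^{k-1})$. Performing the $O_k(1)$ evaluations, combining their outputs with the (integer, $\mathrm{poly}(n)$-bounded) coefficients of the~\cite{BKS18} identity in $O(\log n)$ extra space, and testing the result for positivity produces the claimed algorithm; since we only ever evaluate a signed integer expression of polynomially bounded magnitude over $\mathbb{Z}$, no semiring or monotonicity issue arises.

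The step I expect to be the real obstacle is the first one: pinning down the exact form of the~\cite{BKS18} reduction and certifying that it never produces a $K_k$-subproblem. The naive M\"obius inversion from induced to non-induced subgraph counts \emph{does} contain a $K_k$ term (its coefficient is $\pm 1$), so one genuinely needs the more careful argument of~\cite{BKS18}; in addition one has to check that every auxiliary pattern it generates lives on at most $k$ vertices -- guaranteeing $\td \le k-1$ -- and that all intermediate quantities are $\mathrm{poly}(n)$-bounded, which is what licenses the $O(\log^2 n)$ space bound above. By contrast the second step is a direct reading of the construction behind Theorem~\ref{thm:mainformulas} plus the standard ``evaluate a uniform low-depth formula of polynomially bounded value in polylogarithmic space'' argument already used for Theorem~\ref{thm:algocount} (and noted in the opening of Section~\ref{sec:algo}).
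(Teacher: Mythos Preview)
Your second step---evaluating the elimination-tree formulas in $O(n^{k-1})$ time and $O(\log^2 n)$ space---matches the paper exactly. The gap is in the first step. There is no identity in \cite{BKS18} (or anywhere) that expresses the Boolean ``$G$ contains an induced $H$'' as an integer combination of homomorphism counts over patterns that \emph{avoid} $K_k$; indeed, Theorem~\ref{thm:algotransfer}, which you yourself cite, shows that any exact integer expression for the induced count must carry a $K_k$ term, and your plan of combining $K_k$-free quantities and ``testing the result for positivity'' would, if it worked, compute $\#\mathrm{ind}(H,G)$ exactly and hence $k$-clique counts.

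What the paper (following Corollary~7.6 of \cite{BKS18}) actually does is different in two essential ways. First, it keeps the $K_k$ term in the linear combination $\#\mathrm{ind}(H,G)=\sum_{F\supseteq H,\,|V(F)|=k} c_F\,\#\mathrm{sub}(F,G)$ but observes that $|c_{K_k}|>1$ whenever $H\neq K_k$; choosing a prime $p\mid c_{K_k}$ kills that term and lets one compute $\#\mathrm{ind}(H,G)\bmod p$ in the stated resources, since every remaining $F\neq K_k$ has $\td(F)\le k-1$. Second, deciding existence is then obtained by a \emph{randomized} reduction from detection to modular counting. So the missing ingredients in your proposal are the modular trick and the randomization; in particular the algorithm the paper delivers is randomized, whereas your sketch implicitly promises a deterministic one.
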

\begin{proof}
  The key idea is that the number of induced subgraphs isomorphic to $H$ can be expressed as a linear combination of the number of subgraphs isomorphisms from $k$-vertex supergraphs of $H$. We can compute all of these numbers except for $K_k$ in the given time and space. The coefficient of the number of subgraphs isomorphisms of $K_k$ in this linear combination is always a number greater than $1$. Therefore, by choosing a suitable prime $p$, we can count the number of induced subgraphs isomorphic to $H$ modulo $p$ in the given time and space. The induced subgraph isomorphism problem randomly reduces to this problem using a simple reduction that only takes linear time and logarithmic space. More details can be found in Corollary~7.6 of \cite{BKS18}.
\end{proof}

We prove a theorem that connects algorithms that count induced subgraph isomorphisms for any $k$-vertex pattern to algorithms that count $k$-cliques. The motivation behind this theorem is the question posed by \cite{FloderusKLL15} asking whether some induced patterns are easier than others.

\begin{theorem}\label{thm:algotransfer}
  If there is a $k$ vertex pattern graph such that we can count the number of induced subgraph isomorphisms from the pattern to a given $n$ vertex host graph in time $t(n)$ and space $s(n)$, then we can count the number of $k$-cliques in a given $n$ vertex host graph in time $O(t(n) + n^{k-1})$ and space $O(s(n) + \log^2(n))$.
\end{theorem}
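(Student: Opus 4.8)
The plan is to express the number of $k$-cliques of the host graph $G$ through the classical inclusion--exclusion relation between induced and non-induced subgraph counts, using the hypothetical algorithm for exactly one term and the combinatorial algorithm of Theorem~\ref{thm:algocount} for all the others. Write $V(H)=[k]$. For a graph $F$ on vertex set $[k]$, let $\mathrm{sub}(F,G)$ be the number of injective maps $[k]\to V(G)$ sending $E(F)$ into $E(G)$, and let $\mathrm{ind}(F,G)$ be the number of such maps that additionally send non-edges of $F$ to non-edges of $G$. Every injective map $\phi$ with $\phi(E(H))\subseteq E(G)$ has a unique ``induced type'' $F$ with $E(H)\subseteq E(F)$, so $\mathrm{sub}(H,G)=\sum_{E(H)\subseteq E(F)}\mathrm{ind}(F,G)$, and M\"obius inversion over the Boolean lattice of edge sets yields
\begin{equation*}
  \mathrm{ind}(H,G)\;=\;\sum_{E(H)\subseteq E(F)\subseteq\binom{[k]}{2}}(-1)^{|E(F)|-|E(H)|}\,\mathrm{sub}(F,G).
\end{equation*}
The two points that make this useful are that the coefficient of the top term $\mathrm{sub}(K_k,G)$ is $(-1)^{\binom{k}{2}-|E(H)|}$, hence nonzero, and that $\mathrm{sub}(K_k,G)$ equals $k!$ times the number of $k$-cliques of $G$.

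Next I would track normalizations and resource bounds. The number of subgraphs of $G$ isomorphic to a graph $F$ equals $\mathrm{sub}(F,G)/\naut{F}$, and the number of induced copies of $H$ in $G$ equals $\mathrm{ind}(H,G)/\naut{H}$. Thus one call to the hypothetical algorithm, in time $t(n)$ and space $s(n)$, gives $\mathrm{ind}(H,G)$ after multiplication by the constant $\naut{H}$. Every $F$ in the sum other than $K_k$ is a $k$-vertex graph different from $K_k$, hence has treedepth at most $k-1$ and is covered by Theorem~\ref{thm:algocount}: its number of subgraph isomorphisms into $G$ is computable in time $O(n^{k-1})$ and space $O(\log^2 n)$, and $\mathrm{sub}(F,G)$ is recovered by multiplying by the constant $\naut{F}$. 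There are at most $2^{\binom{k}{2}}$ such $F$, a constant, so performing these computations one after another and reusing workspace costs $O(n^{k-1})$ time and $O(\log^2 n)$ space in total.

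Finally I would solve the displayed identity for the single unknown $\mathrm{sub}(K_k,G)$ (multiply the known linear combination through by the unit coefficient $(-1)^{\binom{k}{2}-|E(H)|}$) and divide by $k!$ to obtain the number of $k$-cliques. Every quantity occurring is a nonnegative integer of value at most $n^k$, so it has $O(k\log n)$ bits; performing a constant number of additions and subtractions followed by one exact division by the constant $k!$ therefore uses $O(\log n)\subseteq O(\log^2 n)$ space and negligible time. Summing the three contributions, the total running time is $t(n)+O(n^{k-1})$ and the total space is $s(n)+O(\log^2 n)$, exactly as claimed. This is essentially the argument behind Corollary~7.6 of \cite{BKS18} combined with Theorem~\ref{thm:algocount}, and it answers the question raised in \cite{FloderusKLL15} by showing that, up to an additive $O(n^{k-1})$ overhead in time and $O(\log^2 n)$ in space, counting induced copies of \emph{any} $k$-vertex pattern is at least as hard as $k$-clique counting.

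I do not foresee a serious obstacle. The only things genuinely requiring care are (i) verifying that the $K_k$-coefficient in the inclusion--exclusion identity does not vanish, which is immediate once the identity is phrased over the Boolean lattice of edge sets, where the relevant M\"obius value is $\pm1$; and (ii) checking that the final bookkeeping --- the labeled/unlabeled conversions through the constants $\naut{\cdot}$ and the exact division by $k!$ --- stays within the stated space budget, which it does because every intermediate value is a polynomially bounded integer and only a constant number of arithmetic operations is performed.
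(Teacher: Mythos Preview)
Your proof is correct and follows essentially the same route as the paper's: express $\mathrm{ind}(H,G)$ as a (nonzero) linear combination of subgraph counts of supergraphs of $H$, compute every term except the $K_k$ term via Theorem~\ref{thm:algocount}, compute the left-hand side via the hypothesis, and solve for the $K_k$ count. You simply make explicit the inclusion--exclusion identity and the normalization constants that the paper leaves to the reference \cite{BKS18} (Theorem~7.5 there, not Corollary~7.6, which is the relevant pointer for this statement).
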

\begin{proof}
  The number of induced subgraph isomorphisms for any $k$-vertex pattern $H$ can be expressed as a linear combination of the number of subgraph isomorphisms of $k$-vertex supergraphs of $H$. We can compute every number in the linear combination except the number of subgraph isomorphisms of $K_k$ in $O(n^{k-1})$ time and $O(\log^2(n))$ space. Given that we can also count the number of induced subgraph isomorphisms of $H$ in $t(n)$ time and $s(n)$ space, the result follows. More details can be found in Theorem~7.5 of \cite{BKS18}.
\end{proof}

\bibliographystyle{alpha}
\bibliography{hom}

\newcommand{\etalchar}[1]{$^{#1}$}
\begin{thebibliography}{WWWY15}

\bibitem[ADH{\etalchar{+}}08]{AlonDHHS08}
Noga Alon, Phuong Dao, Iman Hajirasouliha, Fereydoun Hormozdiari, and
  S{\"{u}}leyman~Cenk Sahinalp.
\newblock Biomolecular network motif counting and discovery by color coding.
\newblock In {\em Proceedings 16th International Conference on Intelligent
  Systems for Molecular Biology (ISMB), Toronto, Canada, July 19-23, 2008},
  pages 241--249, 2008.

\bibitem[BB02]{Borgelt02}
C.~{Borgelt} and M.~R. {Berthold}.
\newblock Mining molecular fragments: finding relevant substructures of
  molecules.
\newblock In {\em 2002 IEEE International Conference on Data Mining, 2002.
  Proceedings.}, pages 51--58, 2002.

\bibitem[BCL{\etalchar{+}}06]{borgs2006counting}
Christian Borgs, Jennifer Chayes, L{\'a}szl{\'o} Lov{\'a}sz, Vera~T S{\'o}s,
  and Katalin Vesztergombi.
\newblock Counting graph homomorphisms.
\newblock In {\em Topics in discrete mathematics}, pages 315--371. Springer,
  2006.

\bibitem[BKS18]{BKS18}
Markus Bl{\"{a}}ser, Balagopal Komarath, and Karteek Sreenivasaiah.
\newblock Graph pattern polynomials.
\newblock In Sumit Ganguly and Paritosh~K. Pandya, editors, {\em 38th {IARCS}
  Annual Conference on Foundations of Software Technology and Theoretical
  Computer Science, {FSTTCS} 2018, December 11-13, 2018, Ahmedabad, India},
  volume 122 of {\em LIPIcs}, pages 18:1--18:13. Schloss Dagstuhl -
  Leibniz-Zentrum f{\"{u}}r Informatik, 2018.

\bibitem[BS83]{DBLP:journals/tcs/BaurS83}
Walter Baur and Volker Strassen.
\newblock The complexity of partial derivatives.
\newblock {\em Theor. Comput. Sci.}, 22:317--330, 1983.

\bibitem[CDM17]{CurticapeanDM17}
Radu Curticapean, Holger Dell, and D{\'{a}}niel Marx.
\newblock Homomorphisms are a good basis for counting small subgraphs.
\newblock In Hamed Hatami, Pierre McKenzie, and Valerie King, editors, {\em
  Proceedings of the 49th Annual {ACM} {SIGACT} Symposium on Theory of
  Computing, {STOC} 2017, Montreal, QC, Canada, June 19-23, 2017}, pages
  210--223. {ACM}, 2017.

\bibitem[CGW89]{Chung89}
F.~R.~K. Chung, R.~L. Graham, and R.~M. Wilson.
\newblock Quasi-random graphs.
\newblock {\em Combinatorica}, 9(4):345--362, 1989.

\bibitem[CLV19]{ChauguleLV19}
Prasad Chaugule, Nutan Limaye, and Aditya Varre.
\newblock Variants of homomorphism polynomials complete for algebraic
  complexity classes.
\newblock In {\em Computing and Combinatorics - 25th International Conference,
  {COCOON} 2019, Xi'an, China, July 29-31, 2019, Proceedings}, volume 11653 of
  {\em Lecture Notes in Computer Science}, pages 90--102. Springer, 2019.

\bibitem[Die06]{Diestel06}
R.~Diestel.
\newblock {\em Graph Theory}.
\newblock Electronic library of mathematics. Springer, 2006.

\bibitem[DJ04]{dalmau04}
V\'{\i}ctor Dalmau and Peter Jonsson.
\newblock The complexity of counting homomorphisms seen from the other side.
\newblock {\em Theoret. Comput. Sci.}, 329(1-3):315--323, 2004.

\bibitem[DMM{\etalchar{+}}16]{DurandMMRS16}
Arnaud Durand, Meena Mahajan, Guillaume Malod, Nicolas de~Rugy{-}Altherre, and
  Nitin Saurabh.
\newblock Homomorphism polynomials complete for {VP}.
\newblock {\em Chic. J. Theor. Comput. Sci.}, 2016, 2016.

\bibitem[DST02]{DiazST02}
Josep D{\'{\i}}az, Maria~J. Serna, and Dimitrios~M. Thilikos.
\newblock Counting h-colorings of partial k-trees.
\newblock {\em Theor. Comput. Sci.}, 281(1-2):291--309, 2002.

\bibitem[FKLL15a]{FloderusKLL15b}
Peter Floderus, Miroslaw Kowaluk, Andrzej Lingas, and Eva{-}Marta Lundell.
\newblock Detecting and counting small pattern graphs.
\newblock {\em {SIAM} J. Discrete Math.}, 29(3):1322--1339, 2015.

\bibitem[FKLL15b]{FloderusKLL15}
Peter Floderus, Miroslaw Kowaluk, Andrzej Lingas, and Eva{-}Marta Lundell.
\newblock Induced subgraph isomorphism: Are some patterns substantially easier
  than others?
\newblock {\em Theor. Comput. Sci.}, 605:119--128, 2015.

\bibitem[FLR{\etalchar{+}}12]{FominLRSR12}
Fedor~V. Fomin, Daniel Lokshtanov, Venkatesh Raman, Saket Saurabh, and
  B.~V.~Raghavendra Rao.
\newblock Faster algorithms for finding and counting subgraphs.
\newblock {\em J. Comput. Syst. Sci.}, 78(3):698--706, 2012.

\bibitem[FMST19]{fournier2019}
Herv{\'e} Fournier, Guillaume Malod, Maud Szusterman, and S{\'e}bastien
  Tavenas.
\newblock {Nonnegative Rank Measures and Monotone Algebraic Branching
  Programs}.
\newblock In Arkadev Chattopadhyay and Paul Gastin, editors, {\em 39th IARCS
  Annual Conference on Foundations of Software Technology and Theoretical
  Computer Science (FSTTCS 2019)}, volume 150 of {\em Leibniz International
  Proceedings in Informatics (LIPIcs)}, pages 15:1--15:14, Dagstuhl, Germany,
  2019. Schloss Dagstuhl--Leibniz-Zentrum fuer Informatik.

\bibitem[Gre12]{Grenet12}
Bruno Grenet.
\newblock An upper bound for the permanent versus determinant problem, 2012.

\bibitem[HY16]{HrubesY16}
Pavel Hrubes and Amir Yehudayoff.
\newblock On isoperimetric profiles and computational complexity.
\newblock In {\em 43rd International Colloquium on Automata, Languages, and
  Programming, {ICALP} 2016, July 11-15, 2016, Rome, Italy}, pages 89:1--89:12,
  2016.

\bibitem[JS82]{jerrum_snir}
Mark Jerrum and Marc Snir.
\newblock Some exact complexity results for straight-line computations over
  semirings.
\newblock {\em J. ACM}, 29(3):874–897, July 1982.

\bibitem[KKM00]{KloksKM00}
Ton Kloks, Dieter Kratsch, and Haiko M{\"{u}}ller.
\newblock Finding and counting small induced subgraphs efficiently.
\newblock {\em Inf. Process. Lett.}, 74(3-4):115--121, 2000.

\bibitem[KLL13]{KowalukLL13}
Miroslaw Kowaluk, Andrzej Lingas, and Eva{-}Marta Lundell.
\newblock Counting and detecting small subgraphs via equations.
\newblock {\em {SIAM} J. Discrete Math.}, 27(2):892--909, 2013.

\bibitem[KR20]{KR20}
Deepanshu Kush and Benjamin Rossman.
\newblock Tree-depth and the formula complexity of subgraph isomorphism.
\newblock {\em CoRR}, abs/2004.13302, 2020.

\bibitem[KZY13]{Kong13}
Xiangnan Kong, Jiawei Zhang, and Philip~S. Yu.
\newblock Inferring anchor links across multiple heterogeneous social networks.
\newblock In {\em Proceedings of the 22nd ACM International Conference on
  Information \& Knowledge Management}, CIKM '13, page 179–188, New York, NY,
  USA, 2013. Association for Computing Machinery.

\bibitem[Lov12]{Lovasz12}
L\'{a}szl\'{o} Lov\'{a}sz.
\newblock {\em Large networks and graph limits}, volume~60 of {\em American
  Mathematical Society Colloquium Publications}.
\newblock American Mathematical Society, Providence, RI, 2012.

\bibitem[LPH{\etalchar{+}}20]{Xin2020}
Xin Liu, Haojie Pan, Mutian He, Yangqiu Song, Xin Jiang, and Lifeng Shang.
\newblock Neural subgraph isomorphism counting.
\newblock In {\em Proceedings of the 26th ACM SIGKDD International Conference
  on Knowledge Discovery \& Data Mining}, KDD '20, page 1959–1969, New York,
  NY, USA, 2020. Association for Computing Machinery.

\bibitem[LRR17]{LiRR17}
Yuan Li, Alexander~A. Razborov, and Benjamin Rossman.
\newblock On the ac\({}^{\mbox{0}}\) complexity of subgraph isomorphism.
\newblock {\em {SIAM} J. Comput.}, 46(3):936--971, 2017.

\bibitem[LS08]{Lovasz08}
L\'{a}szl\'{o} Lov\'{a}sz and Vera~T. S\'{o}s.
\newblock Generalized quasirandom graphs.
\newblock {\em J. Combin. Theory Ser. B}, 98(1):146--163, 2008.

\bibitem[Mar10]{Marx10}
D\'{a}niel Marx.
\newblock Can you beat treewidth?
\newblock {\em Theory Comput.}, 6:85--112, 2010.

\bibitem[MP14]{MarxP14}
D{\'{a}}niel Marx and Michal Pilipczuk.
\newblock Everything you always wanted to know about the parameterized
  complexity of subgraph isomorphism (but were afraid to ask).
\newblock In Ernst~W. Mayr and Natacha Portier, editors, {\em 31st
  International Symposium on Theoretical Aspects of Computer Science {(STACS}
  2014), {STACS} 2014, March 5-8, 2014, Lyon, France}, volume~25 of {\em
  LIPIcs}, pages 542--553. Schloss Dagstuhl - Leibniz-Zentrum f{\"{u}}r
  Informatik, 2014.

\bibitem[MS18]{MahajanS18}
Meena Mahajan and Nitin Saurabh.
\newblock Some complete and intermediate polynomials in algebraic complexity
  theory.
\newblock {\em Theory Comput. Syst.}, 62(3):622--652, 2018.

\bibitem[MSOI{\etalchar{+}}02]{Milo824}
R.~Milo, S.~Shen-Orr, S.~Itzkovitz, N.~Kashtan, D.~Chklovskii, and U.~Alon.
\newblock Network motifs: Simple building blocks of complex networks.
\newblock {\em Science}, 298(5594):824--827, 2002.

\bibitem[Nis91]{nisan91}
Noam Nisan.
\newblock Lower bounds for non-commutative computation.
\newblock In {\em Proceedings of the Twenty-Third Annual ACM Symposium on
  Theory of Computing}, STOC '91, page 410–418, New York, NY, USA, 1991.
  Association for Computing Machinery.

\bibitem[NP85]{Poljak1985}
Jaroslav Nešetřil and Svatopluk Poljak.
\newblock On the complexity of the subgraph problem.
\newblock {\em Commentationes Mathematicae Universitatis Carolinae},
  026(2):415--419, 1985.

\bibitem[Ros18]{Rossman18}
Benjamin Rossman.
\newblock Lower bounds for subgraph isomorphism.
\newblock In {\em Proceedings of the {I}nternational {C}ongress of
  {M}athematicians---{R}io de {J}aneiro 2018. {V}ol. {IV}. {I}nvited lectures},
  pages 3425--3446. World Sci. Publ., Hackensack, NJ, 2018.

\bibitem[Sap15]{saptharishi2015survey}
Ramprasad Saptharishi.
\newblock A survey of lower bounds in arithmetic circuit complexity.
\newblock {\em Github survey}, 2015.

\bibitem[Sar]{J}
Jayalal Sarma.
\newblock Personal communication.

\bibitem[Sch76]{Schnorr76}
C.P. Schnorr.
\newblock A lower bound on the number of additions in monotone computations.
\newblock {\em Theoretical Computer Science}, 2(3):305 -- 315, 1976.

\bibitem[Sni80]{Snir80}
Marc Snir.
\newblock On the size complexity of monotone formulas.
\newblock In Jaco de~Bakker and Jan van Leeuwen, editors, {\em Automata,
  Languages and Programming}, pages 621--631, Berlin, Heidelberg, 1980.
  Springer Berlin Heidelberg.

\bibitem[VW09]{VW09}
Virginia Vassilevska and Ryan Williams.
\newblock Finding, minimizing, and counting weighted subgraphs.
\newblock In {\em Proceedings of the Forty-First Annual ACM Symposium on Theory
  of Computing}, STOC '09, page 455–464, New York, NY, USA, 2009. Association
  for Computing Machinery.

\bibitem[Wes00]{west}
Douglas~B. West.
\newblock {\em Introduction to Graph Theory}.
\newblock Prentice Hall, 2 edition, September 2000.

\bibitem[WWWY15]{WilliamsWWY15}
Virginia~Vassilevska Williams, Joshua~R. Wang, Richard~Ryan Williams, and
  Huacheng Yu.
\newblock Finding four-node subgraphs in triangle time.
\newblock In Piotr Indyk, editor, {\em Proceedings of the Twenty-Sixth Annual
  {ACM-SIAM} Symposium on Discrete Algorithms, {SODA} 2015, San Diego, CA, USA,
  January 4-6, 2015}, pages 1671--1680. {SIAM}, 2015.

\bibitem[Yeh19]{Yehudayoff19}
Amir Yehudayoff.
\newblock Separating monotone vp and vnp.
\newblock In {\em Proceedings of the 51st Annual ACM SIGACT Symposium on Theory
  of Computing}, STOC 2019, page 425–429, New York, NY, USA, 2019.
  Association for Computing Machinery.

\bibitem[ZW86]{zhang}
J.~Zhang and G.~Wu.
\newblock Targeting social advertising to friends of users who have interacted
  with an object associated with the advertising, Dec. 15 2010. US Patent App.
  12/968,786.

\bibitem[ZYL{\etalchar{+}}17]{ZhaoYLSL17}
Huan Zhao, Quanming Yao, Jianda Li, Yangqiu Song, and Dik~Lun Lee.
\newblock Meta-graph based recommendation fusion over heterogeneous information
  networks.
\newblock In {\em KDD}, pages 635--644, 2017.

\end{thebibliography}

\end{document}